\newcommand{\norm}[1]{||#1||}
\newtheorem{thm}{Theorem}[section]
\newtheorem{lem}[thm]{Lemma}
\title{Achieving Both Valid and Secure Logistic Regression Analysis on Aggregated Data from Different Private Sources}
\author{Stephen E. Fienberg\thanks{Department of Statistics, Machine Learning Department and Cylab, Carnegie Mellon University, Pittsburgh, PA. \tt{mailto:fienberg@stat.cmu.edu}} \hspace{3pt}and Robert J. Hall\thanks{Department of Statistics and Machine Learning Department, Carnegie Mellon University, Pittsburgh, PA. \tt{mailto:rjhall+@cs.cmu.edu}} \hspace{1pt}    and Yuval Nardi\thanks{Faculty of Industrial Engineering and Management, Technion - Israel Institute of Technology, Haifa, Israel. \tt{mailto:ynardi@ie.technion.ac.il}}}
\date{}
\begin{document}

\maketitle

\begin{abstract}
Preserving the privacy of individual databases when carrying out statistical calculations has a long history in statistics and had been the focus of much recent attention in machine learning  In this paper, we present a protocol for computing logistic regression when the data are held by separate parties without actually combining information sources by exploiting results from the literature on multi-party secure computation.  We provide only the final result of the calculation compared with other methods that share intermediate values and thus present an opportunity for compromise of values in the combined database.  Our paper has two themes:  (1) the development of a secure protocol for computing the logistic parameters, and a demonstration of its performances in practice, and (2) and amended protocol  that speeds up the computation of the logistic function.   We  illustrate the nature of the calculations and their accuracy using an extract of data from the Current Population Survey divided between two parties.\\

{\bf Keywords:}  Distributed analysis; Logistic regression; Privacy-preserving computation; Secure multiparty computation.

\end{abstract}

\section{Introduction}

Privacy concerns are becoming more and more acute, especially in the digitized world where new supercomputers with an increasing processing capacities appear almost every day. These new machines together with impressive new technologies make the process of data collection, data storing and data analysis as easy as ever. This ``ease of use,'' may be manipulated by untrustful elements, whose aim is to deliberately cause harm by, for example, identifying and exposing sensitive data. It is the goal of privacy preserving methods to prevent or at least  lessen the chances of such harmful actions from happening. In this paper we present a novel way to achieve the goal when a certain statistical analysis is required.

Preserving the privacy of individual databases when carrying out statistical calculations has a long history in statistics and had been the focus of much attention in machine learning, e.g., see~\cite{aggarwal}. Once data are merged across sources, however, privacy becomes far more complex and a number of privacy issues arise for the linked individual files that go well beyond those that are considered with regard to the data within individual sources. When the goal is the production of the results of some statistical calculation, such as a regression analysis, c.f. Karr et al.~\cite{securereg-jcgs05,securereg-springer06}, we can often exploit results from the cryptography literature, borrowing tools  such as secure multi-party computation, e.g., see~\cite{lindell_pinkas_1,ppdm_book}. Secure multi-party protocols are concerned with distributed computation where each participating party, holding a private input, learns nothing but the result (see Section \ref{sec:smpc}).

This paper has two main themes. In both themes we conceptualize the existence of a single combined database containing all of the information for the individuals in the separate databases and for the union of the variables.    We propose an approach that gives full statistical calculation on this combined database without actually combining information sources, see~\cite{lindell_pinkas_2,lindell_pinkas_1}.  We focus mainly on logistic regression, but our methods and tools are essentially adaptable to other statistical models, as indicated in Section \ref{sec:extension}.  Our approach provides only the final result of the calculation compared with other methods that share intermediate values and thus present an opportunity for compromise of values in the combined database, c.f.~\cite{fien_secure,fien_secure2}. We remark that our problem differs from the one studied by Chaudhuri and Monteleoni~\cite{Chaudhuri} using differential privacy, since they are concerned with information leakage by the output of the computation, whereas we are concerned with leakage from the computation itself !

The first theme is the development of a novel approach to perform the calculations required for fitting logistic regression models when the data are distributed among several parties. In our settings the parties are unwilling or are simply forbidden (by law regulations) to share their respective data. They acknowledge the fact that pooling their private data into a conceptual global database, and running the logistic regression on the pooled data, rather than on their own data, can only lead to a better statistical analysis. We develop a secure protocol to compute the maximum likelihood estimates of the logistic parameters. Throughout the paper we make repeated use of what is known as random secret  sharing, which enables us to keep intermediate parameter values secret. The first theme aims at performing the required calculations by using operations which are restricted to a linear algebra type. Note that the fitting process requires computing the logistic function which is highly non-linear.
In principle, we may perform  any computation securely, by making use of Yao's general protocol \cite{yao82}.  Nonetheless, this is in essence a theoretical construction which will often be inefficient for large computations \cite{ppdm_book}. Instead, we craft a specially designed approximation to the logistic function, which can be securely evaluated using the machinery of random shares and Yao's millionaire protocol.

We establish the theoretical validity of the secure protocol for computing the logistic parameters, and show its performances in practice. In high dimensional problems with large number of cases our protocol may require faster computing resources. This is mainly because our approximation requires computing the predicate ``greater-than,'' which may take many encryptions. Indeed, evaluating this predicate by a reduction to Yao's protocol takes roughly $O(b)$ encryptions where $b$ is the number of bits used to represent the numbers (this becomes dauntingly large due to the secret sharing scheme).

This leads us to the second theme, which tries to amend the protocol is a way that speeds up the computation of the logistic function. The main idea here is to avoid special circuit sub-protocols, such as  Yao's protocol. To that end, we show that we can perform the fitting process  using only sums and products. The advantage to this  is that these computations are very well studied primitives in secure multiparty computation and thus we can instantiate our method in a different secure multiparty computation scheme (e.g., \cite{ppdm_book, goldreich}), depending on the security demands of the data holders. We propose to approximate the vector of logistic function values, by repeatedly Taylor expanding around the current value and stepping along the gradient. Operations other than sums and products, are not needed here.   In principle, the approach represents the logistic function as the solution to a ordinary differential equation, then applies Euler's method to approximate the solution. As with the first theme, we show that we can make  the approximation arbitrarily accurate, at the expense of computational efficiency, and we present an illustrative empirical result.

We close the introduction with a brief description of logistic regression, mainly for the purpose of setting notation.
Logistic regression is used for predicting binary outcomes or class membership given a set of explanatory variables or predictors. We can use the fitted model  to predict class membership for a newly obtained record consisting of only the values of  the predictors. The basic framework of logistic regression treats binary responses $y_1, \ldots, y_n$ as realizations of $n$ independent Bernoulli random variables, $Y_1, \ldots, Y_n$, whose mean depends on a set of predictors $x_i\in\mathbb{R}^d$, as follows:
\begin{equation}
\mathbb{E}Y_i = \sigma( x_i^T\beta) \; ,
\end{equation}
where $\sigma(a) = (1+\exp(-a))^{-1}$, is the sigmoid (or the logistic) function, and $\beta$ is a $d$-dimensional parameter vector. This makes the log odds, $\log(\mathbb{E}Y_i/(1-\mathbb{E}Y_i))$, linear in the predictors.

A standard method for computing the maximum likelihood estimates of $\beta$ is Newton-Raphson's  method, since closed form expressions do not exist. The fitting process  requires the user to supply the log-likelihood function associated with logistic regression, along with its first two derivatives.  Suppressing dependence on the data and vector of parameters, we let $\ell$ be the  log-likelihood, i.e., $\ell=\sum_i\{y_ix_i^T\beta - \log(1+e^{x_i^T\beta})\}$. We also put on record the first two derivatives:


\begin{equation}\label{grad-hess}
\nabla\ell = \sum_i\{x_i y_i - x_i \sigma(x_i^T\beta)\} \quad ,\quad
\nabla^2 \ell = -\sum_i{\sigma(x_i^T\beta)(1-\sigma(x_i^T\beta))x_ix_i^T} \; .
\end{equation}


The gradient and the Hessian are assembled together to produce an estimate of the logistic parameters through the iterative process:  
\begin{equation}\label{eq:nr}
\beta_{t+1}=\beta_t - (\nabla^2 \ell)^{-1} \nabla \ell \; .
\end{equation}
Our protocol will be structured in rounds, where each round corresponds to an iteration of Newton's method (\ref{eq:nr}) followed by a convergence check.  Each round involves a loop through all the cases $x_i$ to compute the contribution to the gradient and Hessian.  
We keep intermediate values of $\beta_t$ unshared between the parties. This is made possible by representing $\beta_t$ as random shares (see Section \ref{sec:blocks}).

The remainder of the paper is organized as follows. Section \ref{sec:smpc} presents the multi-party setup.  In section \ref{sec:blocks} we provide several sub-protocols which we will need. Sections \ref{sec:protocol1} and \ref{sec:protocol2} describe our protocol and an approach for speeding up the calculation involved, respectively. Section \ref{sec:security} describes implementation details. Section \ref{sec:experiment}  illustrates aspects of the computation on an extract of data from the Current Population Survey divided between two parties. Section  \ref{sec:extension} discusses possible extensions.  We defer all technical details  to Appendices \ref{sec:validity_1} and \ref{sec:validity_2}.


\subsubsection*{Setting}

We let $X$ denote the $n\times d$ design matrix, and $y$ the $n$-dimensional response vector. We assume the presence of $P \geq 2$ parties who are interested in computing logistic regression on the total of their data.  We suppose that the union of the parties data corresponds to the $X$ and $y$ of the logistic regression.  In particular, we suppose that party $j$ holds onto the pair $(X_j, y_j)$ with $X_j \in \mathbb{R}^{n\times d}$ and $y_j \in \{0,1\}^n$, where $X_j$ is the $j^{th}$ party design matrix, and $y_j$ is her (binary) response vector.

In this work we consider a setting where each party has an ``additive share'' of the dataset.  That is, $\sum_j X_j = X$ and $\sum_j y_j=y$ where $X$ and $y$ correspond to the design matrix and response vector of the combined data on which the logistic regression is performed.  This subsumes all the partitioning schemes for the database (e.g., vertical and horizontal partitioning which are the cases considered in \cite{ppdm_book}) as in these cases for each element, one party holds the value and the remaining parties hold zero.  Furthermore this setup is applicable in a case where parties may have overlapping data, and the logistic regression is to be learned by using a linear function of the overlapping data (e.g., a weighted average) as a kind of measurement error model.  We suppose that the union of the individual data sets gives the complete data.  In cases where some data are missing, we can apply a privacy preserving imputation method such as in \citet{jagannathan}  as a preprocess, and then run our protocol.

We note that our method is general in the sense that it is applicable to every partitioning scheme, but it is clearly possible to treat specific cases such as vertically partitioned data  with more efficient specialized protocols.

\section{Secure Multi-Party Computation}\label{sec:smpc}

Ideally we would like our method to provide only the output of the calculation to the parties involved, and reveal nothing more.  This is a lofty goal without the aid of trusted third parties, however it is relaxed in a useful way in the cryptographic literature.  First it is assumed that the parties are not able to quickly solve computationally hard problems (such as breaking RSA encryption).  Then, a protocol is secure so long as intermediate values in the computation either contain almost no information (in the sense that the protocol would have to be re-run astronomically many times on the same input data in order to detect any information in the messages), or will only reveal information as the result of an intractable computation.  We now briefly review the security model we intend to use.

We consider the ``functionality'' (see \cite{goldreich}) which maps the data of each party into the logistic regression parameter vector $\beta$:

\begin{equation}\label{lr-fun} \{(X_1,y_1),(X_2,y_2),\cdots (X_P,y_P)\} \rightarrow \{\beta, \beta, \cdots \beta \}
\end{equation}

The right hand side represents $P$ copies of the parameter, so that each party receives the same output.  Note that each design matrix is of the same dimensions.

A protocol for computing the functionality is just a sequence of steps, consisting of parties performing local computations, and sending intermediate messages to each other. In this work we build up a protocol for computing (\ref{lr-fun}) which is secure in the presence of ``semi-honest'' parties.  That is, parties who obey the protocol (and do not try to e.g., inject malformed data) but keep a transcript of all the messages they receive.  Intuitively, a protocol is secure in this setting whenever the intermediate messages give no information about the secret inputs of other parties.  Formally, the ``view'' of the $j^{th}$ party during the protocol is:

\begin{equation}\label{lr-view}
\text{view}_j((X_1,y_1),(X_2,y_2),\cdots (X_P,y_P)) = \{(X_j,y_j),r,m_1,\cdots m_{|m|} \}
\end{equation}
where $r$ is a record of all the random draws made by party $j$, and $m_k$ is the $k^{th}$ message received by that party (we have dropped dependence of $m$ on $j$ for readability).

The protocol is secure so long as there exists a polynomial time algorithm which, when given only the input and output of party $j$, may output a random transcript of message which is \emph{computationally indistinguishable} from $\text{view}_j$.   See  \citet{goldreich} for a definition and discussion of computational indistinguishability.  In essence, if the distribution of the sequence of messages depends only on the private input and output of party $j$ then we can simulate messages by drawing from this distribution (so long as the random number generator returns samples which are computationally indistinguishable from draws from the distribution).  The existence of a simulator shows that intermediate messages do not depend on the private input of other parties, and so the protocol is secure in the sense that parties gain no more information about each other's private inputs than that revealed by the output of the protocol.  Note that this type of security is ``orthogonal'' to that studied in \citet{Chaudhuri}, which seeks to prevent leakage of secret information in the parameter vector.

An example of a protocol which does not achieve this definition of security is one where all parties send their data to party 1, who computes the parameter locally on the combined data and then sends it back to all other parties.  In this case the messages received by party 1 consist of the data of other parties, in general it is impossible to simulate these messages given only the input and output belonging to party 1.

In the next section we present a protocol for performing Newton's method on the logistic regression objective in a way which is secure in the presence of semi-honest parties. Our protocol makes use of a specially designed approximation for the logistic function. Section \ref{sec:protocol2} then describes a different approximation necessitating  the operations of only sums and products, and thus speeding-up the computations.

Although we propose to use the cryptographic model for security, others exist and deserve a place in the theory of privacy preserving data analysis.  The main alternatives we see are ``weak'' security, and {\em perturbation} of the data.  The former comprises a body of literature summarized in \citet{ppdm_book}.  The idea is that by giving weaker privacy guarantees, we can implement much more efficient protocols.  Whether it is acceptable to have this weaker privacy guarantee is a question which one must   consider  on a case-by-case basis.  Although we describe our protocol in terms of the cryptographic model, by replacing the primitive operations (in Section~\ref{sec:blocks}) with their weakly-secure counterparts, we convert our protocol into a weakly secure (but also computationally more efficient) one.

The second alternative is data perturbation or sanitization.  The idea would be for each party to somehow perturb his data until he is happy to release it to the other parties (e.g., through the addition of random noise).  Thereupon the parties would each have a noisy copy of all the data, and could locally compute whatever statistical method they wanted on the union of the data.  The difficulty with this approach is that to protect privacy may require the addition of noise of such amplitude as to render the data itself useless.

\section{Primitives for Secure Protocols}\label{sec:blocks}

In this section we lay out some primitives and  sub-protocols which we will commbine to make a full logistic regression protocol.  While, details of the implementation of these primitives are  in the references cited, we also include some  in the appendix.

\subsection{Secret Sharing}
In our construction we  make extensive use of additive secret sharing.  The idea is to divide a quantity of interest $a$ into $P$ random numbers $a_j$ (one for each party) so that $\sum_ja_j = a$.  If the $a_j$ are distributed uniformly in the field (e.g., the entirety of $\mathbb{R}$) then any subset of the $a_j$ will reveal nothing about $a$.  In fact the sum over any subset is a random variable, the distribution of which does not depend on the secret value.

We use this construction to keep all intermediate quantities secret during the evaluation of Newton's method (i.e., the gradient, Hessian and intermediate parameter vectors).  As long as we can construct sub-protocols which compute random shares of a quantity, from random shares of inputs, then we can compose these sub-protocols together to finally obtain random shares of the logistic regression estimate.  With these in hand the parties can then exchange shares and reveal the vector itself.

Although the joint distribution of the $a_j$ concentrates on the linear subspace corresponding to the secret value, marginally the shares are uniformly distributed and do not  depend on any parameters.  Hence we can easily simulate messages based on these shares since the marginal distributions are known, and  we achieve security as defined in Section~\ref{sec:smpc} .  Next we show how to compute additive shares of all the intermediate quantities using the abstract definition of additive shares.  Although this approach is intuitively appealing, computers would quickly run into problems representing samples drawn uniformly from $\mathbb{R}$.  Therefore, in the appendix we show how to approximate arbitrarily well the same computations in modular arithmetic on $\mathbb{Z}_B = \{0,1,\cdots B-1\}$ for some large $B$.

\subsection{Computing Sums and Products with Random Shares}\label{sec_la}

To implement Newton's method we must essentially  perform linear algebraic operations on random shares, for example by computing shares of the Newton step from shares of the gradient and inverse Hessian.  In this section we describe how to obtain random shares of sums and products of quantities that are themselves represented as random shares.  Using these constructions, we   compute inner and outer products of vectors of random shares, and hence also matrix multiplies.

Computing shares of the sum of two secret quantities $a=\sum_j a_j$ and $b=\sum_j b_j$ is direct, as it involves only the local computation $a_j+b_j$ for each party $j=1,\ldots, P$. That is, party $j$ simply adds his shares $a_j$ and $b_j$ together to get a random share of the quantity $a+b$.
Obtaining random shares of the product of two secret quantities is more involved:

\begin{equation*} ab = \sum_ja_j\sum_kb_k = \sum_j{a_jb_j} + \sum_j\sum_{k\neq j}{a_jb_k} \end{equation*}

\noindent  The elements of the first sum on the right hand side can be computed locally by each party.  The second (double) sum, however,  requires products between random shares held by different parties.  To obtain these terms while maintaining the security of the protocol, we turn to \emph{oblivious function evaluation}.  That is, we pose the problem of computing the product as evaluating a function so that one party only knows the function and the other party only knows his input and the value of the function applied to that input.

The function set up by party $j$ and evaluated by party $k$ on his input, $b_k$, is:
\begin{equation}\label{prod-share-fn}
f_{(j \rightarrow k)}(x\, ; a_j)  = a_jx + r_{j,k} \; ,
\end{equation}

\noindent where $r_{j,k}$ is a quantity generated uniformly at random by party $j$.  Evaluation is done in a manner so that party $j$ learns nothing of the output (and thus only learns about $r_{j,k}$ which he generated) and party $k$ learns only the output.  Since party $k$ does not know the value of the random variable $r_{j,k}$ he has learned potentially nothing about the true value of the product.  Taking $m_{j,k} = -r_{j,k}$ and $n_{j,k} = f_{(j\rightarrow k)}(b_k\,;a_j)$ we have that $m_{j,k} + n_{j,k} = a_jb_k$, and thus they form random shares of the product $a_jb_k$.

Once parties compute random shares  for all the terms $a_jb_k$, they can locally compute random shares of $c=ab$ as:
\begin{equation}\label{eq:prod_shares}
c_j = a_jb_j + \sum_{k\neq j}(n_{k,j} + m_{j,k}) = a_jb_j + \sum_{k\neq j}(f_{(k \rightarrow j)}(a_j\,;b_k) - r_{j,k})  \; .
\end{equation}
Summing up these quantities, and utilizing the definition of the linear function set up in (\ref{prod-share-fn}), we easily obtain:
\begin{equation}
\sum_j c_j = \sum_j\big\{a_jb_j + \sum_{k\neq j}(a_jb_k + r_{k,j} - r_{j,k})\big\} = \sum_j{a_j\sum_k{b_k}} \; , \nonumber
\end{equation}
which shows that the $c_j$'s in (\ref{eq:prod_shares}) are indeed (additive) shares of the product.

This protocol generates random shares of the product even if the original shares weren't themselves random, e.g., if they were due to the partitioning of the data.  A method which implements these encrypted multiplications using fixed-point arithmetic is given in \cite{fhn:10}.  

We also note that dividing one secret value into another securely is much more difficult than dealing with products and requires more elaborate (and computationally demanding) protocols.  Below we show how matrix inversion can be performed without any divisions.

\subsection{Evaluating Interval Membership}\label{sec:yaosgt}
We suppose we are able to evaluate the following predicate in a secure way:
$$1\{a \geq b\}$$
Where $a,b$ are secret values held by separate parties.  This is known as Yao's ``millionaires problem,'' since he described it in the context of determining which millionaire has the most money, without disclosing actual bank balances.

An example of a protocol which computes this predicate is given by \citet{gt_proto}.  We can also trivially extend it so that each party receives a random share of the output bit (i.e., each party receives a random bit, the ``xor'' of which yields the correct output bit).  Using this technique we can also check whether a secret value (i.e., a sum of random shares) is greater or less than some constant:

\begin{equation}\label{yaos-gt-eqn} 1\{a_1+a_2 \geq c\} = 1\{a_1 \geq c-a_2\},\end{equation}

\noindent where $a_1,a_2$ are the random shares of $a$ held by two parties.

\subsection{Securely Inverting a Matrix}\label{sec:matrix_inversion}

We use a matrix inversion routine  built up entirely of matrix multiplications and subtractions, thus allowing us to use the constructions of the preceding sections to implement it securely. We obtain the reciprocal of a number $a$  without necessitating any actual division  by an application of Newton's method to the function $f(x)=x^{-1}-a$. Iterations follow $x_{s+1}=x_s(2-ax_s)$, which requires multiplication and subtraction only.

It turns out that we can apply the same scheme   to matrix inversion,  e.g.,  see~\cite{guo_higham} and references therein. A numerically stable, coupled iteration for computing $A^{-1}$, takes the form:
\begin{equation}\label{matrix-inv}
\begin{array}{llllll}
X_{s+1} &=& 2X_s-X_sM_s  & \quad , \quad X_0 &=& c^{-1}I \; ,\\
M_{s+1}&=& 2M_s-M_s^2 & \quad , \quad  M_0 &=& \; c^{-1}A,
\end{array}
\end{equation}
where $M_s = X_s A$, and $c$ is to be chosen by the user. A possible choice, leading to a quadratic convergence of $X_s\rightarrow A^{-1}$ $(M_s \rightarrow I)$, is $c=\max_i \lambda_i(A)$. In our actual implementation we used instead the trace (which dominates the largest eigenvalue, as the matrix in question is positive definite), since we can compute shares of the trace from shares of the matrix locally by each party. To compute $c^{-1}$ we use the same iteration, with scalars instead of matrices.  For this iteration we initialize with an arbitrarily small $\epsilon>0$ (as convergence depends on the magnitude of the initial value being lower than that of the inverse we compute).  We use the constructions of section~\ref{sec_la} to iterate through (\ref{matrix-inv}) until convergence. As $M_s\rightarrow I$, we check for convergence by considering the absolute difference between the trace of $M_s$ and the data dimension $d$, and we can evaluate the function $1\{|\text{tr}(M_s)-d|>\epsilon\}$ on random shares of the trace of $M_s$ using the same form as (\ref{yaos-gt-eqn}).

\section{First Protocol for Logistic Regression}\label{sec:protocol1}

We recall the usual Newton-Raphson iteration expression (\ref{eq:nr}). 
To perform the iteration  we first  compute random shares of the update direction: $\Delta_t = -(\nabla^2\ell(\beta_t))^{-1} \nabla\ell (\beta_t)$,
\noindent via the formulation of matrix-vector products of random shares.  
We can then add these random shares to the current parameters $\beta_t$ to obtain random shares of $\beta_{t+1}$.  To check convergence recall (from e.g., \cite{cvx_book}) we should end if:
\begin{equation}\label{convergence}
\lambda^2 = (\nabla\ell(\beta_t))^T\Delta_t \leq \epsilon \; .
\end{equation}
We can compute (\ref{convergence}) securely using the same form as (\ref{yaos-gt-eqn}).  The result is sharable among all the parties, and the protocol ends whenever the result is 0, i.e., when $\lambda^2$ is not greater than $\epsilon$.

By using the constructions of the previous section, we have the tools required to invert shares of the Hessian, and thus to compute the Newton step.  All that we need to do is  construct a secure protocol to evaluate the logistic (sigmoid) function.  In principle, a specialized sub-protocol could be built up using the construction of Yao \cite{yao82}.  The method would be to construct circuit that evaluates the sigmoid function in the same manner that the arithmetic logic unit in a CPU would.  Then we could give this circuit the secure treatment and make it into a protocol following \citet{goldreich}.  The disadvantage with this approach is that the circuit evaluation protocols are prohibitively expensive and thus they are not  useful in practice except for trivial circuits, see e.g.,  \citet{fairplay}.   Instead we use a specially crafted approximation to the logistic function in terms of indicator functions.  We describe this next.

\subsection{A Secure Approximation to the Logistic Function}\label{sec:RS_of_sigma}

The logistic function itself is the CDF of the logistic distribution.  We propose to approximate this function with an ``empirical CDF.''  This is a function of a set of $L$ samples $z_l$, taken independently from a logistic distribution:

\begin{equation}\label{sigma-appox}
\sigma(a) \approx F_L(a) = L^{-1}\sum_{l=1}^L{1\{a \geq z_l\}} \; .
\end{equation}

Based on the Glivenko-Cantelli theorem, and later work by Dvoretzky, Kiefer and Wolfowitz, the rate at which the empirical CDF converges to the true CDF (i.e., the logistic function which is of interest) is known.  Using these results, we  obtain bounds on the maximum difference between the logistic function and our approximation, which hold with high probability.  See the remark below in Section~\ref{sec:ECDF} about the accuracy of the approximation.

We now turn attention to obtaining random shares of the logistic function evaluated at random shares of $\beta^Tx_i$.  We obtain random shares of $\beta^Tx_i$ by using the inner product construction for multiplying together random shares.  If we denote shares of this inner product by $(\beta^Tx_i)_j$ for party $j$, we  write:
\begin{equation} \label{eq:sigma-approx2}
\sigma(\beta^Tx_i) \approx L^{-1}\sum_l{1\{\beta^Tx_i \geq z_l\}} = L^{-1}\sum_l{1\{(\beta^Tx_i)_1 + (\beta^Tx_i)_2 \geq z_l\}} \; .
\end{equation}
Thus the problem reduces to getting random shares of the sum of indicators.   Note that we can re-write each indicator function as:
\begin{equation} \label{gt-shares}
1\{(\beta^Tx_i)_1 + (\beta^Tx_i)_2 \geq z_l\} = 1\{(\beta^Tx_i)_1 \geq z_l - (\beta^Tx_i)_2 \} \; .
\end{equation}

\noindent If party 2 generates the logistic random variables then we have a trivial reduction to (\ref{yaos-gt-eqn}).  In order to restrict the view of either party to a random share, we restrict the output to random bits $o^l_1$, and $o^l_2$, such that
\[
o^l_1\oplus o^l_2 = \left\{
\begin{array}{cc}
1 & \mbox{if} \quad 1\{a\geq z_l\}   \\
0 & \mbox{otherwise}
\end{array}\right. \; ,
\]
where $\oplus$ is the exclusive or. The right-hand side of equation (\ref{eq:sigma-approx2}) requires (random shares of) the fraction of outputs with $o^l_1\oplus o^l_2=1$. This can be established by noticing that
\[
\sum_{l=1}^L (o^l_1\oplus o^l_2)=\sum_{l=1}^L o^l_1+\sum_{l=1}^L o^l_2-2o_1^To_2 \; ,
\]
where we denote $o_k=(o^1_k, \ldots, o^L_k)$ for $k=1,2$. 
Jagannathan and Wright~\cite{jagannathan} use this method to convert xor shares into additive shares for a different privacy-preserving task.

In order for the output to behave this way, we can either use Yao's protocol directly, or take a (more efficient) GT protocol and modify it to give a (xor) random share.  In this work we use the protocol of \citet{gt_proto}.  
Having computed random shares of the logistic function, we can use the constructions of Section~\ref{sec_la} to compute random shares of the gradient and Hessian, and hence build a full logistic regression protocol.

\subsection{Quality of the Logistic Approximation}\label{sec:ECDF}
A comment about the accuracy of approximation (\ref{sigma-appox}), and the resulting logistic parameter estimator is in order. The tail behavior of the sup-norm of the error is given, for every $\epsilon>0$, by:
\begin{equation}\label{dkw_ineq}
P \Big(\big\|\sigma(\cdot)-L^{-1}\sum_{l=1}^L{1\{\cdot \geq z_l\}}\big\|_\infty>\epsilon \Big)\leq  2e^{-2L\epsilon^2} \; ,
\end{equation}
known as the Dvoretzky-Kiefer-Wolfowitz (DKW) inequality. One possible way to choose the number of Logistic variables $L$ in practice, is by ensuring that the above probability is no more than a prescribed level of accuracy, say $\alpha$. Solving for $L$ we obtain the (very conservative) bound $L\geq -\frac{1}{2}\epsilon^{-2} \log(\alpha/2)$. A less conservative bound might entail the maximum absolute error restricted to some interval (containing the origin). 

We relate error in the approximation of the sigmoid (and hence the gradient) to the error in the convergent parameter by the following inequality:

\begin{equation}\label{parm_err_bound}
||\hat\beta-\beta||_2 \leq \frac{R [L^{-1}+\|\sigma(\cdot)-F_L(\cdot)\|_\infty ] }{\hat\lambda_{\text{min}}} \; ,
\end{equation}
in which $\hat\beta$ is the optimizer of the exact log likelihood, and $\beta$ is the optimizer of our approximation, $\hat\lambda_{\text{min}}$ is the smallest eigenvalue of the Fisher information matrix $I(\beta) = -n^{-1}\nabla^2\ell(\beta)$ (on some interval, see Appendix~\ref{sec:validity_1}), and $R$ is the radius of a ball which containing all the data vectors (i.e., $\forall i, ||x_i||_2 \leq R$).  The proof of this inequality follows lemma 1 of \cite{Chaudhuri} in which the two convex functions are the exact log likelihood objective, and the difference between the exact and approximate objectives. See Appendix~\ref{sec:validity_1} for detailed theoretical derivation.

Since we can use expression (\ref{dkw_ineq})   to bound the numerator of  expression (\ref{parm_err_bound}),   the parameter output by our protocol, and that output by the exact (non-private) algorithm can be brought arbitrarily close (except on a set of negligible probability) by increasing the parameter $L$.  Later, we perform an experiment to show how well the method performs with reasonably small $L$.  Note that for Newton's method to converge in this approximation, we must use the same sample of $L$ logistic random variables each time we approximate the sigmoid.  Otherwise assessing convergence would be difficult as the objective function would be constantly shifting.  We propose that the parties draw $L$ logistic variables ahead of time, and use these for all the computations.

\subsection{Hessian Lower Bound Technique}\label{sec:hess_bound}

Notice that the Newton Raphson method requires inverting a matrix (the Hessian of the log likelihood) at each iteration.  In our setting, using our iterative inversion method this becomes very expensive.  Therefore we propose to use a well-studied approximation \cite{minka}, which replaces the iteration by:

\begin{equation}\label{eq_hessian_lb} \beta_{t+1}=\beta_t - 4(X^TX)^{-1} \nabla \ell \; .
\end{equation}

First note that under this technique the algorithm only ever needs a single matrix inversion, since $X^TX$ is constant throughout all the iterations.  Second, this algorithm still eventually converges to the correct parameter value (modulo the other approximations we make in our protocol).  The reason is that the inverse hessian is always greater than $4(X^TX)^{-1}$, in the sense that the difference is positive semi-definite, see e.g, Minka \cite{minka} for more details.  What's more, this technique ensures that progress towards the optimum is monotonic,  and so  assessing convergence may be simpler.

\subsection{Computation and Communication Complexity}\label{sec:complexity_1}

First we count how many times we must run each of our primitives for each iteration of Newton's method.  
The approximation of section~\ref{sec:RS_of_sigma} requires $nL$ instances of the GT protocol per round, as $L$ instances are required per case.  Computing the gradient 
and the Hessian requires $n(1+d+d^2)$ multiplications.
Inverting the Hessian takes $2d^3$ multiplies and one GT per iteration of (\ref{matrix-inv}).  Since this inner iteration is quadratically convergent, it takes $O(\log{d})$ iterations to converge, and thus takes $O(d^3\log{d})$ multiplies and $O(\log{d})$ instances of GT. 
 In total then, each outer iteration takes $O(nd^2+d^3\log{d})$ multiplies, and $nL+O(\log{d})$ invocations of the GT protocol.

Each multiplication requires a number of encryptions and decryptions;  this scales quadratically with the number of parties $P$ since they must exchange with one another.  Thus the computational workload increases as the data are split into more pieces.  Note that although repeated use of the cryptosystem is quite expensive, performance  on normal hardware is relatively rapid.  A machine  dedicated to the computation and running multiple threads can do thousands of encryptions per  second.

Each instance of GT using the protocol of \cite{gt_proto} requires $O(\log{B})$ encryptions and decryptions (and operations on encrypted values etc.).  Therefore in total our approximation of section~\ref{sec:RS_of_sigma} requires $O(nL\log{B})$ encryptions per iteration.  This may be too computationally demanding for large $L$.  One way to reduce this cost is to run the scheme using a coarse approximation to the sigmoid (i.e., a small $L$) to convergence, then increase $L$, resample the logistic variables and then continue Newton's method from the previous convergent parameter.  Although the latter iterations will still be computationally burdensome, there will be fewer of them. Another way is to use a different approximation to the sigmoid function. This is outlined next in Section \ref{sec:protocol2}.

Note that the total amount of communication by all parties is also proportional to the number of multiples and GT invocations.  For an invocation of either, a party must transmit $\log{N}$ bits to another party, and then receive a message of the same length. There are a total of $O(P^2(nd^2+d^3\log{d})+nL)$ messages which must be sent for each iteration.  If  the number of parties or cases, or the granularity of the approximation is large,   running the protocol over a high speed local area network would make the communication overhead manageable.

\section{Second Protocol for Logistic Regression}\label{sec:protocol2}

As we mentioned above, the computation complexity of evaluating  approximation (\ref{sigma-appox}) to the logistic function scales linearly with $L$, since  on each of Newton's iteration we invoke Yao's protocol to compute the GT predicate, and we do it for every case $i$.    This may be prohibitively expensive even for a moderate $L$. A possible way to reduce this computational burden was briefly described in Section \ref{sec:complexity_1}. Here, we provide full details of a more structured approach, which is reminiscent of Euler's method. The approach is built (again) on computing Newton's iteration (\ref{eq:nr}). It would be more natural in this section to treat the logistic function in a {\it vectorized} fashion, i.e., $\sigma(a)=(\sigma(a_1),\ldots, \sigma(a_n))$, for an $n$-dimensional vector $a=(a_1, \ldots, a_n)$. Therefore, we use different, albeit equivalent, representations for the gradient and Hessian:
\begin{equation}
\nabla \ell = X^T\{y-\sigma(X\beta)\} \quad , \quad \nabla^2\ell=-X^T \text{diag}\{\sigma(X\beta)\circ (1-\sigma(X\beta))\}X \; .
\end{equation}
Here $X$ is the design matrix whose rows are $x_i^T$, the units or feature vectors (see (\ref{grad-hess})). The symbol ``$\circ$'' denotes the element-wise product, i.e., $u\circ v=v\circ u=\text{diag}(u)v$.


We modify the iteration so that we neither  explicitly compute the logistic function $\sigma(\cdot)$ which is involved in both the gradient and Hessian, nor use the approximation in expression (\ref{sigma-appox}).  Note that throughout the procedure we may treat each unit $x_i$ as having an associated logistic function value $\sigma(\beta_t^T x_i)$.  We propose to track a vector of approximate function values $\hat{\sigma}_t \approx \sigma(X\beta_t)$ which will be updated after each iteration.  Then, these approximate values will be used to compute the next iteration of $\beta_t$.  Note that the derivative of the logistic function is given by:
\begin{equation}\label{logistic_deriv} \sigma^\prime(a) = \sigma(a)(1-\sigma(a)) \stackrel{\text{def}}{=} g(\sigma(a)) \; .
\end{equation}
Therefore, knowing the value $\sigma(a)$, we can determine the derivative of the logistic function around $a$   by a single multiplication.  Linearizing around some value $a_0$ gives:
\begin{equation}\label{sigma_lin} \sigma(a) = \sigma(a_0) + (a-a_0)g(\sigma(a_0)) + 2^{-1}(a-a_0)^2\sigma^{\prime\prime}(\cdot)\big|_{a^\star} \approx \sigma(a_0) + (a-a_0)g(\sigma(a_0)) \; ,
\end{equation}
where the second derivative is evaluated at some value $a^\star$ in the interval between $a$ and $a_0$.  Denote by $\Delta_t = \beta_{t+1}-\beta_t$ as in Section \ref{sec:protocol1}, then may make use of the approximation:
\begin{equation}\label{sigma_approx} \hat{\sigma}_{t+1} = \hat{\sigma}_t + (X\Delta_t)\circ g(\hat{\sigma}_t) \; ,
\end{equation}
where $g$ is applied element-wise to $\hat{\sigma}_t$.

Over the course of the entire algorithm, the approximation $\hat{\sigma}_t$ is updated repeatedly, in a manner very similar to using Euler's method to numerically integrate the differential equation (\ref{logistic_deriv}).  It is well known that the error of this method decreases with the size of the ``step'' taken at each iteration. In the above, the steps are of size $X\Delta_t$, which will in general be different on each iteration, and will also be different for each unit.  In order to control the error we amend this approximation by breaking down the step into $k$ smaller steps each of size $k^{-1}X\Delta_t$, and performing $k$ such updates.  As we shall see, we may base our choice of $k$ on some aspect of the design matrix,  $X$, in order to reach a desired level of error in the approximation.  We write this approximation as:

\begin{equation}\label{sigma_approx_k}
\hat{\sigma}_{t+1} = \hat{\sigma}_t + k^{-1}X\Delta_t \circ \sum_{i=1}^k g(\hat{\sigma}_i^\star) \stackrel{\text{def}}{=} \hat{\sigma}_t + X\Delta_t \circ \tilde{g}_k(\hat{\sigma}_t, X\Delta_t) \; ,
\end{equation}
where the $\hat{\sigma}_i^\star$ are the intermediate values corresponding to the inner iterations, and we define $\tilde{g}_k$ as the function which gives the average value of $g$ evaluated on these values.

We summarize our method  in the following coupled iteration:
\begin{eqnarray}
\beta_0 &=& 0^{d\times 1} \nonumber\\\nonumber
\hat{\sigma}_{0} &=& 2^{-1}\cdot 1^{n\times 1} \\
\Delta_t &=& 4(X^TX)^{-1}X^T(y-\hat{\sigma}_{t}) \label{eq:coupled}  \\  \nonumber
\beta_{t+1} &=& \beta_t + \Delta_t \\ \nonumber
\hat{\sigma}_{t+1} &=& \hat{\sigma}_{t} + X\Delta_t \circ \tilde{g}_k(\hat{\sigma}_t, X\Delta_t) \; , \nonumber
\end{eqnarray}
where $0^{d\times 1}$ is the $d$-dimensional vector of zeros and $1^{n\times 1}$ is the $n$-dimensional vector of ones.
The proposed iteration differs from the protocol of Section \ref{sec:protocol1} (and from the usual Newton-Raphson method).  The main difference is that we have replaced the logistic function approximation (\ref{sigma-appox}) with our Taylor approximation. Note that we are using again the bound on the Hessian (see Section \ref{sec:hess_bound}), which would make computation easier. We use this technique in our method for this reason, and also since it interacts well with our Taylor approximation by ensuring that convergence towards the optimum is in a sense monotonic, as shown in Section \ref{sec:convergence}.  In keeping with our goal of using only sums and products, we recall that it is possible to invert a matrix with just these operations (see Section \ref{sec:matrix_inversion}).


We now present a bound on the distance from our approximated regression coefficients $\beta_t$, to the true optimizer of the log-likelihood which we denote by $\hat\beta$, as in (\ref{parm_err_bound}).
Since our iterations are guaranteed to converge (see Section \ref{sec:convergence}), we can choose to run the iterations until $||X^T(y-\sigma_t)||_2$ is smaller than some threshold $b$ (i.e., by choosing $t$ accordingly):
\[ b  \geq ||X^T(y-\hat{\sigma_t})||_2  \geq ||X^T(y-\sigma_t)||_2 - ||X^T(\hat{\sigma}_t - \sigma_t)||_2  \; .
\]

Therefore we can bound the norm of the gradient of the logistic log-likelihood taken at our final parameter estimate:
$$||\nabla\ell(\beta_t)||_2^2 \leq b + nRc\tau \; .$$
where $R$ is the radius of a ball containing all the data vectors, exactly as in (\ref{parm_err_bound}), $c$ is some constant, and $\tau$ is a quantity upper bounding the maximal Euler's step size.

We can use this to construct our main result about the quality of our approximation.  Suppose we choose $b\leq nRc\tau$, then from the above we have:
\begin{equation}\label{eq:error_2}
||\beta_t-\hat\beta||_2 \leq \frac{2Rc\tau}{\hat{\lambda}_{\text{min}}} \; ,
\end{equation}
where $\hat{\lambda}_{\text{min}}$ is the smallest eigenvalue of the Fisher information matrix $I(\cdot) = -n^{-1}\nabla^2\ell(\cdot)$ in the line segment between $\beta$ and $\hat\beta$.  Note that $\hat{\lambda}_{\text{min}} = n^{-1}\lambda_{\text{min}}$ and  the factors of $n$ cancel.

Therefore we can make the accuracy of our approximation arbitrarily good by decreasing $\tau$ although, as we shall see there is a tradeoff involved.  A smaller $\tau$ usually means a higher $k$, resulting in increased computational demands. We refer the reader to Appendix~\ref{sec:validity_2} for complete technical details.

\subsection{Choice of $k$}
Thus far we have that the error of the approximation decreases as $\tau$ is decreased; however, this last variable is not controlled directly (as $L$ was in protocol 1) but rather is a function of $k$, the number of steps taken for each outer iteration of the algorithm.

In principle, to get at a prescribed step size $\tau$, we can choose $k$ by noting that:

\begin{equation}\label{eq_choose_k}\tau \leq ||k^{-1}X\Delta_t||_\infty \leq k^{-1} ||X(X^TX)^{-1}X(y-\hat{\sigma}_t)||_2 \leq k^{-1}\sqrt{n}\end{equation}

This leads to the overly conservative choice of $k=\tau^{-1}\sqrt{n}$.  An alternative choice is to run the protocol with a small value of $k$, e.g., 10, and then to re-run with different values to assess the sensitivity of the computation.  In Section~\ref{sec:experiment} we show that this technique performs wekk even with small $k$.

\subsection{Computational Complexity}

We can measure the overall complexity of our method in terms of the number of products that are needed, since these are the most time-consuming operations we use.  First note that to construct the matrix $X^TX$ takes $nd^2$ products, and inversion of this matrix takes $O(d^3)$ using (\ref{matrix-inv}), where the constant is related to the condition of the matrix.  Then on each iteration, to compute $\Delta_t$ takes $nd + d^2$ products.  Our approximation to the logistic function takes $nk$ products, for a total of $n(k+d) + d^2$ products per iteration.

We compare this with the cost of a protocol which computes the logistic function via a specially designed sub protocol based on circuit evaluation, cf. \citet{yao82}.  If the latter may be evalutated using $q$ encryptions, then the complexity would be $n(d+q) + d^2$ operations per iteration.  As mentioned before this number would typically be much larger than $k$ (for example on the order of the number of bits used to represent the numbers).  Therefore on each iteration we can save a multiple of $n$ operations, which may be especially important when $n$ is large.

\section{Security Guarantees}\label{sec:security}

Since our protocol runs until convergence, the number of rounds is variable and depends on the data itself.  Furthermore a matrix inversion was performed by an iterative scheme which itself took some variable number of iterations to converge.  Therefore we amend the protocol so that the output for each party is a triple consisting of the convergent parameter value, and the number of iterations it took to converge, and the number of iterations taken for the matrix inversion.  This way the messages received from testing convergence are easily simulated (i.e. a zero on every round up until the number specified in the output, then a one on that iteration) and this clearly reveals no more information since the parties know ``where they are'' in the protocol at all times and  could count these numbers of iterations.  Having dealt with this technicality we will consider simulating the other intermediate messages in our simulator, and consider these convergence tests already taken care of.

In both of our protocols, the messages which are transmitted are always part of some sub-protocol, namely multiplication or evaluation of the ``greater than'' predicate.  The only exception to this is the final messages which are sent immediately before the output is reconstructed.  As those messages are themselves random shares they may be simulated easily (although they must be simulated in such a way that they sum to the correct output values, but this is trivial).  The messages which are passed during the sub-protocols may be simulated based on their respective input and outputs so long as the sub-protocols are cryptographically secure.  Since we take care to ensure that the intermediate values are random shares, the simulators for the sub-protocols ``compose'' to form a simulator for the main protocol (see \cite{goldreich}).

\section{Illustrative Experiments}\label{sec:experiment}

We provide two illustrative experiments to demonstrate our approach. The first aims at  showing the performance of our protocol from Section \ref{sec:protocol1}. Specifically, we examine the effect of  approximation (\ref{sigma-appox}) on the resulting parameter values, when small, and large number of logistic variables $L$ are being used. The second example takes a look at the altered protocol from Section \ref{sec:protocol2}, which uses the coupled iteration (\ref{eq:coupled}) instead of approximation (\ref{sigma-appox}), and reports its performances for different values of $k$, the number of Euler's ``steps''.

For both experiments we use an extract from the Current Population Survey (CPS) data (see \url{http://www.bls.gov/cps/}), which includes data on a sample of slightly more than  50,000 U.S. households.  We focus on predicting whether  household income is greater than 50,000 dollars.  We converted $M$-category features into $M-1$ binary features, and divided age into 4 bins corresponding to 20 year intervals.  Note that although we expressed our approach in terms of continuous covariates, it handles binary flags just as well, where said covariates take on e.g., $0.0$ and $1.0$.

Our protocol from Section \ref{sec:protocol1} deviates from the exact computation in two ways, first we use an approximation to the gradient, and second we perform all the calculations in fixed-point arithmetic.  Both of these approximations can be made arbitrarily tight but at the expense of computational efficiency.  To demonstrate that our protocol can be implemented in an efficient manner and produce reasonably accurate results we implemented it in a simulator and compared the results to exact logistic regression on the CPS data.


For each of $L=100$ and $L=500$ we ran our first protocol 100 times. The table below shows the means and standard deviations of the resulting parameter values.  Evidently, as $L$ gets bigger, the accuracy of the parameter values improves.  Figures \ref{fig_like1} and \ref{fig_like1a} show how the likelihood of the estimate maintained by the protocol increases with the number of iterations.  We computed the error bars   by removing the 5 samples that deviated from the mean by the greatest amount, and plotting the minimum and maximum from the remaining ones.  This then corresponds to an approximate 95\% confidence interval, and would become an exact interval if we were to perform more and more simulations.  For the purposes of comparison, we also plotted the likelihood achieved by the exact non-private Newton Raphson algorithm, and a non-private algorithm which we referred to as ``hessian lower bound.''  Both give upper bounds for what we  hope to achieve, the latter is an algorithm where we just use the approximation of (\ref{eq_hessian_lb}), and exact (i.e., non-private) logistic sigmoid values.  We see that as $L$ increases, the first protocol more closely approximates the hessian lower bound technique, which converges more slowly than the exact Newton Raphson method.

For the second experiment, we ran our coupled iteration on the CPS data with $k=5,10$. Although each iteration of our algorithm may be cheap, all is for nought if we require many more iterations for convergence. To determine whether this happens we compared our method to the Hessian lower bound method of (\ref{eq_hessian_lb}), since this represents our algorithm without the approximation. In figure \ref{fig_like2}, we plot the likelihoods of the second protocol against the iteration number.  Since there is no randomness in the second approximation, there are no error bars.  Even for small values of $k$, much smaller than those suggested by (\ref{eq_choose_k}), the approximation to the Hessian lower bound technique is quite good, and increasing $k$ further (e.g., to 50) results in curves which are exactly the same as that of the Hessian lower bound method.  In table \ref{tab:protocol2} we show the resulting parameter estimates for both methods.

\begin{table}
\begin{tabular}{|l | r | r r | r  r |  r | r | r|}
\hline

& NR & P1, $L=100$ & s.d. & P1, $L=500$ & s.d. & P2, $k=5$ & P2, $k=10$\\
\hline
Intercept & -10.7536 & -11.6306 & 1.0761 & -11.5732 & 0.5339 & -10.7944 & -11.2262 \\ \hline
Child Sup &0.0002 &  0.0002 & 0 & 0.0002 & 0 & 0.0002 & 0.0002 \\ \hline
Property Tax & 0.0003 &  0.0003 & 0 & 0.0003 & 0 & 0.0003 & 0.0003 \\ \hline
Num in Household & 0.9802  & 0.9916 & 0.0863 & 0.9881 & 0.0434 & 0.9259 & 0.9601 \\\hline
Num Children & -1.056  & -1.0721 & 0.0935 & -1.0685 & 0.047 & -1.0017 & -1.0384 \\\hline
Num Married & 0.0342  & 0.0343 & 0.0053 & 0.0343 & 0.0022 & 0.032 & 0.0333 \\\hline
Child Sup Ind. & -0.0001  & -0.0001 & 0 & -0.0001 & 0 & -0.0001 & -0.0001 \\\hline
Education & 0.3218 & 0.3276 & 0.0295 & 0.3265 & 0.0146 & 0.3058 & 0.3172 \\\hline
\multirow{4}{*}{Age} & -4.6178  & -3.9701 & 0.3451 & -3.94 & 0.1638 & -3.6202 & -3.8011 \\
& -4.2368  & -3.6782 & 0.3148 & -3.6596 & 0.1504 & -3.4817 & -3.5823 \\
& -3.9608 & -3.3355 & 0.2852 & -3.3157 & 0.135 & -3.1592 & -3.2481 \\
& -4.9575 & -4.4069 & 0.3795 & -4.3814 & 0.1829 & -4.1096 & -4.2624 \\ \hline
\multirow{6}{*}{Marital Status} &  0.0064 & 0.0051 & 0.0282 & 0.0001 & 0.012 & -0.0035 & -0.0007 \\
& -0.5362 & -0.5623 & 0.058 & -0.5562 & 0.0279 & -0.5205 & -0.5404 \\
& -0.4378 & -0.4718 & 0.0819 & -0.4609 & 0.0374 & -0.3934 & -0.4321 \\
& -0.5855 & -0.6096 & 0.0675 & -0.6018 & 0.0309 & -0.572 & -0.5889 \\
& -0.9617 & -1.0283 & 0.1146 & -1.0116 & 0.0549 & -0.957 & -0.9874 \\
& -0.7496& -0.7888 & 0.0852 & -0.7783 & 0.0401 & -0.736 & -0.7598 \\ \hline
-\multirow{3}{*}{Race} & 0.2417 & -0.2588 & 0.0276 & -0.2565 & 0.0135 & -0.2401 & -0.2491 \\
& -0.4981 & -0.5167 & 0.05 & -0.5128 & 0.0244 & -0.4835 & -0.4997 \\
& -0.1658 & -0.1796 & 0.0196 & -0.1785 & 0.0102 & -0.1631 & -0.1719 \\ \hline
Sex & -0.2763 & -0.2802 & 0.0244 & -0.2792 & 0.0125 & -0.2613 & -0.2712 \\
\hline
\end{tabular}
\caption{Estimates produced by the exact method (Newton Raphson), and the two protocols, for different parameter settings of the protocols.}
\label{tab:protocol2}
\end{table}

\begin{figure}[h]
  \centering
      \includegraphics[width=7in]{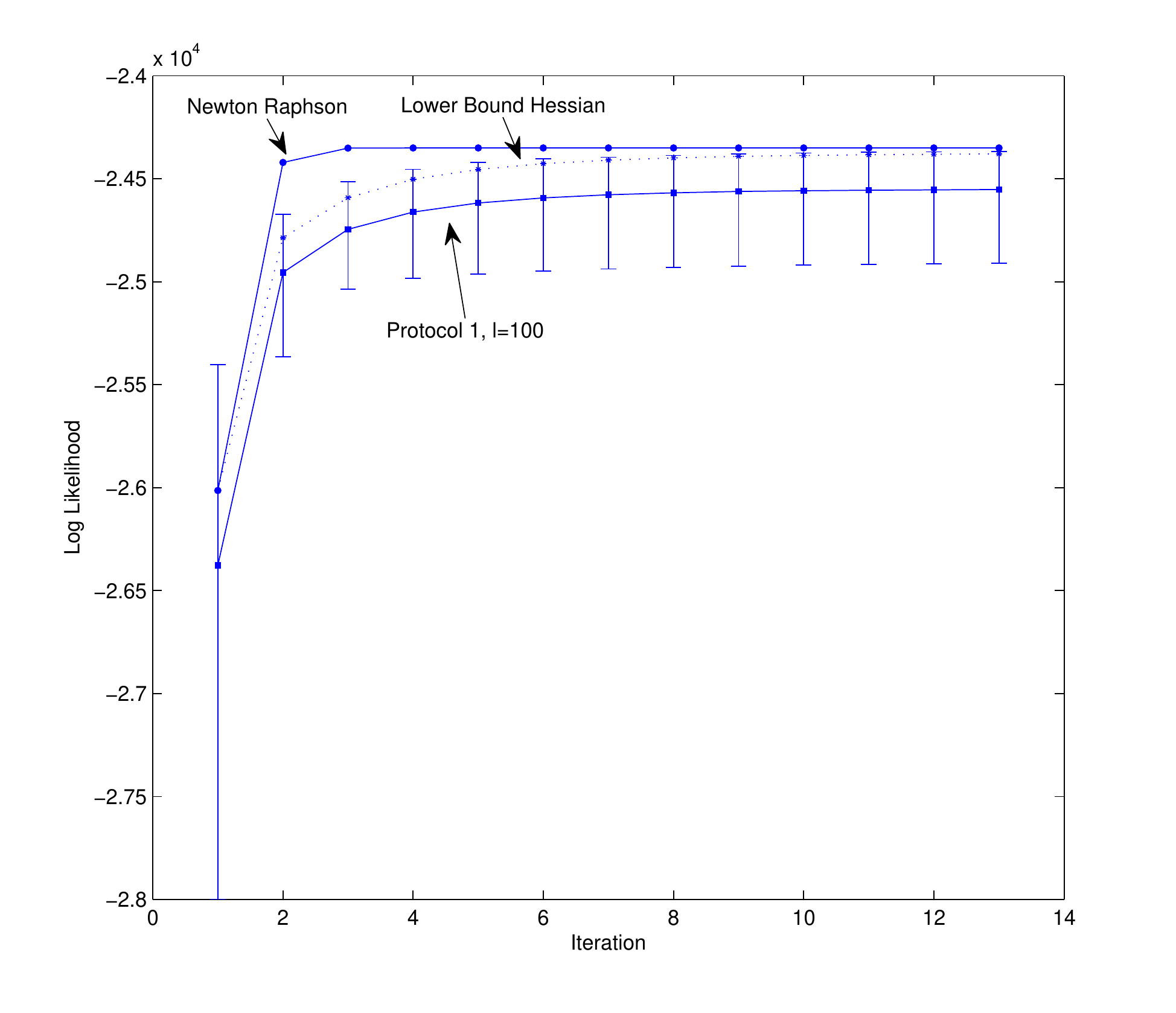}
  \caption{Log Likelihood vs iteration number for protocol 1 with $L=100$, and that of the ``Hessian Lower bound'' algorithm, which is the same as protocol 1 except with exact sigmoid evaluations.  We also compare to the full newton raphson method, which inverts the hessian on each iteration.}
\label{fig_like1}
\end{figure}

\begin{figure}[h]
  \centering
      \includegraphics[width=7in]{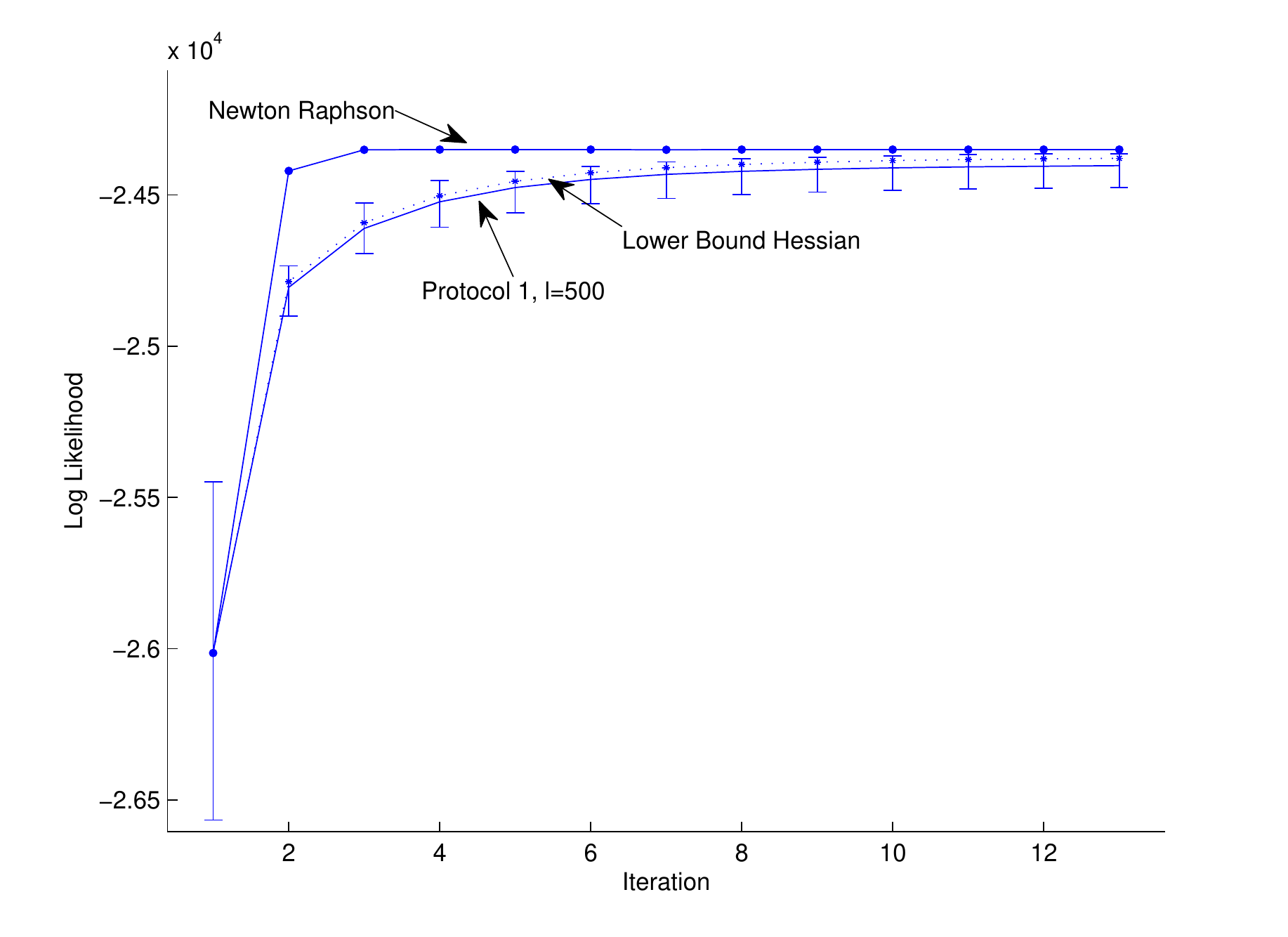}
  \caption{Log Likelihood vs iteration number for protocol 1 with $L=500$, and that of the ``Hessian Lower bound'' algorithm, which is the same as protocol 1 except with exact sigmoid evaluations.  We also compare to the full newton raphson method, which inverts the hessian on each iteration.}
\label{fig_like1a}
\end{figure}

\begin{figure}[h]
  \centering
      \includegraphics[width=7in]{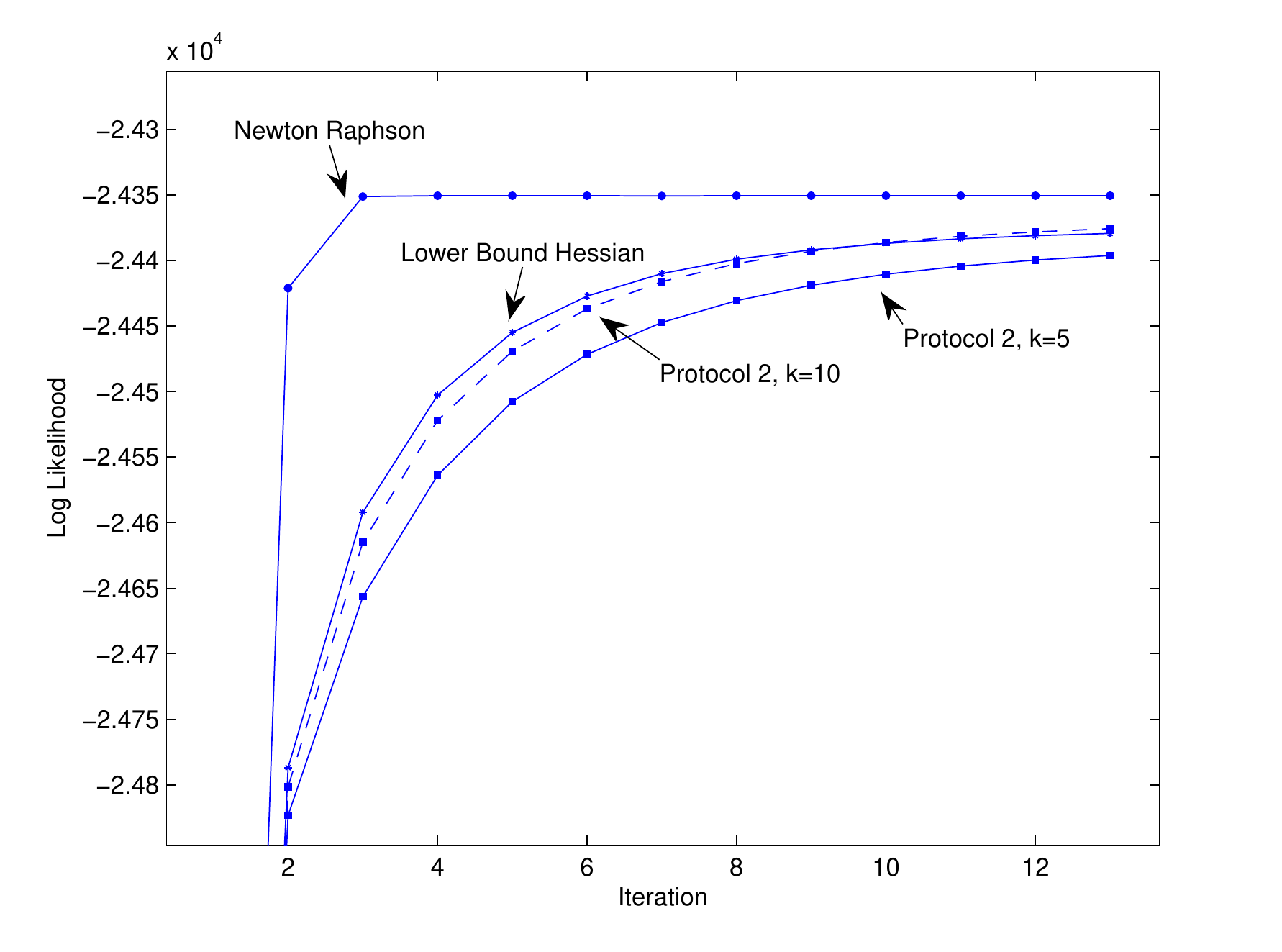}
  \caption{Log Likelihood vs iteration number for protocol 2 with $k=5,10$, and that of the ``Hessian Lower bound'' algorithm, which is the same as protocol 1 except with exact sigmoid evaluations.  We also compare to the full newton raphson method, which inverts the hessian on each iteration.}
\label{fig_like2}
\end{figure}

\section{Beyond Logistic Regression}\label{sec:extension}

We can use the construction of Section \ref{sec:protocol1} to build secure protocols for similar statistical calculations, e.g.,  the constructions for computing shares of outer products and matrix inverses naturally yield a secure algorithm for performing linear regression, for details see~\cite{fhn:10}.  Furthermore using the ``ridge regression'' penalty on the weights (i.e., computing a MAP estimate under a Gaussian prior) can naturally be added to the protocol for both linear and logistic regression.  It is also possible to implement the coordinate ascent computation of the lasso (or sparse logistic regression) using these constructions (i.e., using the GT protocol to perform soft thresholding).

Our protocol generalizes to the class of Generalized Linear Models (GLMs) including logistic regression with other link functions.  GLMs consist of a random component $Y_i$ from an exponential family, a systematic component with a linear predictor $\eta_i = x_i^T\beta$, and a link function $\eta_i=h(\mu_i)$, where $\mu_i=\mathbb{E} Y_i$. If $h$ makes the linear predictor $\eta_i=\theta_i$, where $
\theta_i$ is the natural parameter of the exponential family,  $h$ is canonical.


For Poisson log-linear models with the canonical link, $\mu_i=\exp\{\eta_i\}$, we approximate the exponential function similarly. 
For Gamma models with the canonical link, $\mu_i=1/\eta_i$, and 
for inverse-Gaussian models with the canonical link, $1/\mu^2$,  we can use the number inverting without division scheme. 
Our approach can also be extended to treat binary regression with non-canonical links, such as the \emph{probit} link function, or more generally, inverse CDF link functions. The general form of the gradient is:
\begin{equation}
\nabla \ell =  \sum_i \frac{\{y_i x_i - x_i \mu_i\}}{\text{Var}(Y_i)} \frac{\partial \mu_i}{\partial \eta_i} \; .
\end{equation}
Let $F$ denote a given CDF ($F=\Phi$ leads to the probit link function, while, of course, $F=F_L$ leads to the logit link function). Then, $\mu_i = F(\eta_i)$, and thus $\partial \mu_i/\partial \eta_i = f(\eta_i)$, where $f$ is the density. Therefore, we should find approximations for $f$ as well as for $F$ (approximation for $F$ will follow the same idea as for $F_L$, i.e., using the empirical CDF).

\section{Conclusion}

We have demonstrated that a fully secure approach to logistic regression based on the cryptographic notion of security may be made practical for use on moderately large datasets shared between several parties.  Although it is slower than methods with weaker security guarantees, it offers more rigorous guarantees with respect to the privacy of the input data.  We emphasize that our protocol (like any cryptographic protocol) prevents leakage of information which may arise from the computation itself.  It does not address any leakage which results from the output.

The problem of secure regression is far from solved however, we have yet to deal with the problem of record linkage, and have implicity assumed that the parties know how their respective datasets are aligned.  Furthermore record linkage due to a statistical model may be incorrect and may result in errorful estimates of model parameters.

\appendix
\section{Theoretical Validity of the First Protocol}\label{sec:validity_1}

Here we show how a bound on the error in the approximation (\ref{sigma-appox}) to the logistic function leads to a bound on the quality of the convergent parameter vector output by the protocol. Specifically, we establish the validity of (\ref{parm_err_bound}). Let $R$ denote a constant such that $||x_i||_2 \leq R$, for $i=1,\ldots, n$. Recall the expressions for the gradient $\nabla\ell$ and Hessian $\nabla^2\ell$ given in (\ref{grad-hess}). Define the approximated gradient, by substituting $F_L$ for $\sigma$:
\begin{equation}
\nabla\tilde{\ell}(\beta) = \sum_{i=1}^n{x_iy_i-x_iF_L(x_i^T\beta)} \; .\end{equation}
Rewriting $\nabla\ell(\beta) = \nabla\tilde{\ell}(\beta) + \sum_{i=1}^n{x_iF_L(x_i^T\beta) - x_i\sigma(x_i^T\beta)}$,
and applying the triangle inequality we obtain a bound on the norm of the gradient of the logistic objective:
\begin{equation}\label{exactnorm_bound}
\norm{\nabla\ell(\beta)}_2 \leq 
\norm{\nabla\tilde{\ell}(\beta)}_2 + nR\norm{F_L(\cdot)-\sigma(\cdot)}_\infty \; .
\end{equation}
Next we  convert a bound in the norm of the gradient into a bound on the distance to the optimum.  

\begin{lem}\label{lem_meanval}
Let $\hat{\beta}$ be the optimizer of the logistic regression objective, and let $\lambda_{\text{min}}$ denote the smallest eigenvalue of the negative Hessian in the line segment between $\beta$ and $\hat{\beta}$. Then:
\begin{equation}\label{distance_bound} 
||\beta-\hat{\beta}||_2 \leq \frac{||\nabla\ell(\beta)||_2}{\lambda_{\text{min}}}  \; .
\end{equation}
\end{lem}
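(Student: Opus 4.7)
The plan is to reduce this to a standard mean-value-type argument exploiting the fact that $\hat\beta$ is a stationary point, i.e., $\nabla\ell(\hat\beta)=0$. The key identity I would use is the integral form of the fundamental theorem of calculus applied to the gradient map $\beta \mapsto \nabla\ell(\beta)$:
\begin{equation*}
\nabla\ell(\beta) - \nabla\ell(\hat\beta) = \left(\int_0^1 \nabla^2\ell\bigl(\hat\beta + t(\beta - \hat\beta)\bigr)\,dt\right)(\beta - \hat\beta).
\end{equation*}
Setting $H \;\stackrel{\mathrm{def}}{=}\; \int_0^1 \nabla^2\ell\bigl(\hat\beta + t(\beta-\hat\beta)\bigr)\,dt$ and using $\nabla\ell(\hat\beta)=0$, this reduces to $\nabla\ell(\beta) = H\,(\beta-\hat\beta)$.

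Next I would exploit concavity of the logistic log-likelihood. From the formula for $\nabla^2\ell$ in equation (\ref{grad-hess}) we see that $-\nabla^2\ell(\gamma) = \sum_i \sigma(x_i^T\gamma)(1-\sigma(x_i^T\gamma))x_ix_i^T$ is positive semidefinite at every $\gamma$. Hence $-H$ is a convex combination (an integral average) of PSD matrices and is itself PSD. Moreover, its smallest eigenvalue is at least $\lambda_{\min}$ as defined in the statement, since $\lambda_{\min}$ lower bounds the smallest eigenvalue of $-\nabla^2\ell$ pointwise along the segment, and integrating preserves this lower bound (the smallest eigenvalue of an integral of symmetric matrices is at least the integral of their smallest eigenvalues, by Weyl / a Rayleigh quotient argument).

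I would then finish by a norm bound: taking Euclidean norms in $\nabla\ell(\beta) = H(\beta-\hat\beta)$ and using that the operator norm of $-H$ is at least $\lambda_{\min}$ times the norm of the vector it acts on, i.e. $\|H v\|_2 \geq \lambda_{\min}\|v\|_2$, yields
\begin{equation*}
\|\nabla\ell(\beta)\|_2 \;=\; \|H(\beta-\hat\beta)\|_2 \;\geq\; \lambda_{\min}\,\|\beta-\hat\beta\|_2,
\end{equation*}
which rearranges to the claimed inequality (\ref{distance_bound}). The only mildly delicate point, and the one I would treat carefully, is the passage from the pointwise eigenvalue bound to the eigenvalue bound for the averaged matrix $H$; this is immediate from $v^T(-H)v = \int_0^1 v^T(-\nabla^2\ell)v\,dt \geq \lambda_{\min}\|v\|_2^2$, so no real obstacle remains. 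The substantive use of the lemma downstream will be to combine it with (\ref{exactnorm_bound}) to deduce (\ref{parm_err_bound}).
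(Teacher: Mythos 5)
Your proposal is correct and follows essentially the same route as the paper's own proof: the integral (mean-value) form of the gradient difference, the observation that $\nabla\ell(\hat\beta)=0$, and a Rayleigh-quotient argument showing the averaged negative Hessian inherits the pointwise eigenvalue lower bound $\lambda_{\text{min}}$, which the paper phrases via Weyl's inequality. The only cosmetic difference is that you pass from $v^T(-H)v \geq \lambda_{\text{min}}\|v\|_2^2$ to $\|Hv\|_2 \geq \lambda_{\text{min}}\|v\|_2$ by Cauchy--Schwarz, whereas the paper uses the $B=A^2$ square-root device; both are valid and equivalent in substance.
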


\begin{proof}
We  use the mean-value theorem (for vector-valued functions) to write the difference between gradient vectors at $\beta$ and $\hat{\beta}$:
\begin{equation} \label{eq:mean-value}
\nabla\ell(\beta) - \nabla\ell(\hat{\beta}) = \nabla\ell(\beta) - 0 = \left(\int_0^1 \! \nabla^2\ell(a\beta +(1-a)\hat{\beta}) \ da \right)(\beta-\hat{\beta})  \; .
\end{equation}
Now, for every (symmetric) matrix $B$, and a non-zero vector $e$, the Rayleigh quotient satisfies $e^T Be/e^T e\geq \lambda_{\text{min}}(B)$, where $\lambda_{\text{min}}(B)$ is the minimal eigenvalue of $B$. If $B=A^2$, for a positive definite (symmetric) matrix $A$, this reduces (after taking the square root on both sides) to $\|Ae\|_2/ \|e\|_2 \geq \lambda_{\text{min}}(A)$.  Applying this to (\ref{eq:mean-value}), and using Weyl's inequality, we have:
\begin{equation} 
||\nabla\ell(\beta)||_2 = \left|\left|\left(\int_0^1 \! \nabla^2(a\beta +(1-a)\hat{\beta}) \ da \right)(\beta-\hat{\beta})\right|\right|_2 \geq \lambda_{\text{min}} ||\beta-\hat{\beta}||_2 \; .
\end{equation}
This completes the proof.
\end{proof}

\begin{lem}\label{lem_gradmin}
Using the same notation we have:
\begin{equation}\label{approxnorm_bound} 
\min_{\beta\in\mathfrak{B}} ||\nabla\tilde{\ell}(\beta)||_2 \leq nRL^{-1} \; ,
\end{equation}
where $\mathfrak{B}$ is a (non-empty) set of logistic parameters defined in the proof.
\end{lem}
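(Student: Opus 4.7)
The plan is to introduce a smooth surrogate for the jumping empirical CDF $F_L$, use the surrogate to produce a genuine stationary point, and then transfer that stationarity to a small-gradient statement for $\tilde\ell$ via the triangle inequality.

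First, I would replace the step function $F_L$ by a continuous, non-decreasing piecewise linear interpolant $\bar F_L\colon \mathbb{R}\to[0,1]$. A natural choice is to interpolate $F_L$ linearly between its consecutive jump locations $z_{(1)}<\cdots<z_{(L)}$, and extend by $0$ on $(-\infty,z_{(1)}]$ and by $1$ on $[z_{(L)},\infty)$. Because every jump of $F_L$ has height exactly $L^{-1}$, a direct verification gives $\|\bar F_L-F_L\|_\infty\le L^{-1}$. Since $\bar F_L$ is non-decreasing, its antiderivative $\bar G_L$ (normalized by $\bar G_L(0)=0$) is convex and $C^1$, so the surrogate objective
\[
\bar\ell(\beta) \;=\; \sum_{i=1}^n\bigl\{y_i x_i^T\beta - \bar G_L(x_i^T\beta)\bigr\}
\]
is concave and continuously differentiable in $\beta$, with gradient $\nabla\bar\ell(\beta)=\sum_i x_i\{y_i-\bar F_L(x_i^T\beta)\}$.

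Next, I would take $\mathfrak{B}$ to be the set of maximizers of $\bar\ell$. Under the standard (implicit) non-separability hypothesis that makes the exact logistic MLE $\hat\beta$ well-defined, $\bar\ell$ is likewise coercive and $\mathfrak{B}$ is non-empty; any $\beta^*\in\mathfrak{B}$ is then an interior stationary point with $\nabla\bar\ell(\beta^*)=0$. The bound then follows from the triangle inequality and the assumption $\|x_i\|_2\le R$:
\begin{align*}
\|\nabla\tilde\ell(\beta^*)\|_2
&= \|\nabla\tilde\ell(\beta^*)-\nabla\bar\ell(\beta^*)\|_2
= \Bigl\|\sum_{i=1}^n x_i\bigl(\bar F_L(x_i^T\beta^*)-F_L(x_i^T\beta^*)\bigr)\Bigr\|_2 \\
&\leq nR\,\|\bar F_L-F_L\|_\infty
\leq nRL^{-1},
\end{align*}
and taking the minimum over $\mathfrak{B}$ preserves the bound.

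The main obstacle is cleanly justifying non-emptiness of $\mathfrak{B}$: without some condition on the data, neither $\ell$ nor $\bar\ell$ is bounded above and no maximizer exists. I would handle this by either (i) adding the standard non-separability hypothesis to the lemma statement (so that $\bar\ell$ is coercive), or (ii) replacing the argmax by an approximate-stationarity level set $\{\beta:\|\nabla\bar\ell(\beta)\|_2\le\epsilon\}$ (which is non-empty for every $\epsilon>0$ by concavity of $\bar\ell$) and letting $\epsilon\to 0^+$ after the triangle-inequality step; the smoothing error $nRL^{-1}$ is uniform in $\epsilon$, so the conclusion is preserved. All other pieces—verifying the sup-norm bound, the convexity of $\bar G_L$, and the stationarity condition—are routine.
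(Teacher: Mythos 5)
Your proposal is correct and follows essentially the same route as the paper: both construct a continuous non-decreasing surrogate within $L^{-1}$ of $F_L$ in sup norm, integrate it to get a concave objective $\bar\ell$, define $\mathfrak{B}$ via its stationary points, and conclude by the triangle inequality. If anything you are more careful than the paper, which simply asserts that $\mathfrak{B}$ is non-empty by restricting to surrogates whose maximizer is interior, whereas you explicitly flag the separability issue and give two legitimate ways to patch it.
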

\begin{proof}
Consider a continuous, monotonically non-decreasing function $g(\cdot)$ which satisfies $||g(\cdot)-F_L(\cdot)||_\infty \leq L^{-1}$.  Such a function clearly exists, for example the smooth nondecreasing curve which goes through all points $(z_{(j)},jL^{-1})$ where $1\leq j\leq L$ (where $z_{(j)}$ is $j^{th}$ smallest logistic variable used in $F_L$).  Since $g(\cdot)$ is nondecreasing, it is the derivative of some convex function:
\begin{equation}G(a) = \int_{-\infty}^a{\! g(b)\ db}\end{equation}
Consider the approximation to the logistic gradient which uses $g$ instead of $F_L$:
\begin{equation}\nabla\bar{\ell}(\beta) = \sum_{i=1}^n{x_iy_i-x_ig(x_i^T\beta)}\end{equation}
This is the derivative of a concave function:
\begin{equation}\bar{\ell}(\beta) = \sum_{i=1}^n{x_i^T\beta y_i-G(x_i^T\beta)} \; ,
\end{equation}
which is indeed concave since it is a linear function minus a convex function.  Hence $\bar{\ell}$ has a unique maximum somewhere. Consider the functions $g(\cdot)$ so that the maximum is in the interior of the space $\mathbb{R}^d$ (i.e., is not at infinity).  Hence for each such $g$ we have a point $\bar{\beta} \in \mathbb{R}^d$ where the gradient is zero, i.e., $\nabla\bar{\ell}(\bar{\beta}) = 0$. Denote the set of such $\bar{\beta}$ by $\mathfrak{B}$, and note that $\mathfrak{B}$ is not empty. 
An argument similar to the one that led to (\ref{exactnorm_bound}) shows that:
\begin{equation}
||\nabla\tilde{\ell}(\beta)||_2 = ||\nabla\bar{\ell}(\beta) + \sum_{i=1}^n{x_ig(x_i^T\beta)-x_iF_L(x_i^T\beta)}||_2 \leq ||\nabla\bar{\ell}(\beta)||_2 + nRL^{-1} \; .
\end{equation}
Therefore:
\begin{equation}||\nabla\tilde{\ell}(\bar{\beta})||_2 \leq ||\nabla\bar{\ell}(\bar{\beta})||_2 + nRL^{-1} = nRL^{-1} \; ,
\end{equation}
which completes the proof.
\end{proof}

We now put this all together and state the main result about our approximation $F_L$.

\begin{lem}
If our approximation $\nabla\tilde{\ell}$ is used as an approximation to the gradient of the logistic log likelihood, and numerical optimization is performed until $||\nabla\tilde{\ell}(\beta)||_2 \leq nRL^{-1}$, then:
\begin{equation}  ||\beta-\hat{\beta}||_2 \leq \frac{R(L^{-1} + \norm{F_L(\cdot)-\sigma(\cdot)}_\infty)}{\hat{\lambda}_{\text{min}}} \; ,\end{equation}
where $\hat{\beta}$ is the optimizer of the exact logistic regression objective, $\beta$ is the result of our numerical optimization, $R$ is the radius of a ball containing all the $x_i$, and $\hat{\lambda}_{\text{min}}$ is the smallest eigenvalue of the Fisher information matrix $I(\cdot) = -n^{-1}\nabla^2\ell(\cdot)$ in the line segment between $\beta$ and $\hat{\beta}$. \end{lem}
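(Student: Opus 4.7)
The plan is to bolt together the three ingredients already assembled in this appendix: the mean-value distance bound (Lemma on \eqref{distance_bound}), the triangle-inequality decomposition \eqref{exactnorm_bound}, and the existence-of-small-gradient statement (Lemma on \eqref{approxnorm_bound}). The whole argument is really a bookkeeping chain, and the only subtlety is the rescaling from $\lambda_{\text{min}}$ (eigenvalue of $-\nabla^2\ell$) to $\hat{\lambda}_{\text{min}}$ (eigenvalue of the Fisher information $I=-n^{-1}\nabla^2\ell$), which supplies a factor of $n$ that will cancel another $n$ floating around on the numerator.

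First, I would justify that the hypothesized stopping criterion is feasible: Lemma on \eqref{approxnorm_bound} gives a non-empty set $\mathfrak{B}$ of parameters at which $\norm{\nabla\tilde{\ell}}_2 \leq nRL^{-1}$, so running the numerical optimizer to this tolerance is well-posed. Next, starting from the distance bound \eqref{distance_bound} applied to the returned $\beta$,
\begin{equation*}
\norm{\beta-\hat{\beta}}_2 \leq \frac{\norm{\nabla\ell(\beta)}_2}{\lambda_{\text{min}}},
\end{equation*}
I would control $\norm{\nabla\ell(\beta)}_2$ using the triangle inequality \eqref{exactnorm_bound},
\begin{equation*}
\norm{\nabla\ell(\beta)}_2 \;\leq\; \norm{\nabla\tilde{\ell}(\beta)}_2 + nR\,\norm{F_L(\cdot)-\sigma(\cdot)}_\infty,
\end{equation*}
and then invoke the stopping criterion $\norm{\nabla\tilde\ell(\beta)}_2\leq nRL^{-1}$ to conclude
\begin{equation*}
\norm{\nabla\ell(\beta)}_2 \;\leq\; nR\bigl(L^{-1}+\norm{F_L(\cdot)-\sigma(\cdot)}_\infty\bigr).
\end{equation*}

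Finally, substituting this back into the distance bound yields
\begin{equation*}
\norm{\beta-\hat\beta}_2 \;\leq\; \frac{nR\bigl(L^{-1}+\norm{F_L(\cdot)-\sigma(\cdot)}_\infty\bigr)}{\lambda_{\text{min}}},
\end{equation*}
and using the identification $\lambda_{\text{min}}(-\nabla^2\ell)=n\,\hat{\lambda}_{\text{min}}$ (immediate from $I=-n^{-1}\nabla^2\ell$ and the linearity of extreme eigenvalues under scalar multiplication), the factors of $n$ cancel to deliver the claimed inequality. I expect no genuine obstacle here; the one place to be careful is ensuring that the line segment over which the mean-value argument in Lemma \eqref{distance_bound} lives is the same segment that defines $\hat{\lambda}_{\text{min}}$ in the statement, which is just a matter of consistent notation in threading the three lemmas together.
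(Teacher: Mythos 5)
Your proposal is correct and follows essentially the same route as the paper's own proof: it substitutes the triangle-inequality bound (\ref{exactnorm_bound}) into the mean-value distance bound (\ref{distance_bound}), uses Lemma \ref{lem_gradmin} to guarantee the stopping criterion is attainable, and cancels the factors of $n$ via $\hat{\lambda}_{\text{min}} = n^{-1}\lambda_{\text{min}}$. No substantive difference from the paper's argument.
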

\begin{proof}
Notice that $||\nabla\tilde{\ell}(\beta)||_2 \leq nRL^{-1}$ is guaranteed in light of Lemma \ref{lem_gradmin}. The proof follows by substituting (\ref{exactnorm_bound}) into (\ref{distance_bound}), and by noticing that $\hat{\lambda}_{\text{min}} = n^{-1}\lambda_{\text{min}}$ and  the factors of $n$ cancel.
\end{proof}

\section{Theoretical Validity of the Coupled Iteration}\label{sec:validity_2}

Here we establish the convergence of the coupled iteration (\ref{eq:coupled}), and the error in our Taylor approximation of the logistic function.

\subsection{Monotonicity and Convergence}\label{sec:convergence}

We show that the update described in (\ref{eq:coupled}) converges monotonically towards some final value $\beta$.  We  relate the size of the step taken at one iteration  to the size of the step in the previous iteration.  We aim to show that first, these steps are always in the same directions for each unit, and secondly, the steps are monotonically decreasing and eventually the iterations converge.

\begin{lem}
$X\Delta_{t+1}$ element-wise has the same sign as $X\Delta_t$, in the sense that $X\Delta_{t+1}\circ X\Delta_t \geq 0$.
\end{lem}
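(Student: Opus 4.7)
The plan is to derive an explicit formula for $X\Delta_{t+1}$ in terms of $X\Delta_t$, and then use properties of the orthogonal projection onto the column space of $X$ together with the range of $\tilde{g}_k$ to argue coordinatewise sign preservation.

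First, I would substitute the update $\hat{\sigma}_{t+1}=\hat{\sigma}_t + X\Delta_t \circ \tilde{g}_k(\hat{\sigma}_t, X\Delta_t)$ from (\ref{eq:coupled}) into the definition $\Delta_{t+1} = 4(X^TX)^{-1}X^T(y-\hat{\sigma}_{t+1})$. Setting $P := X(X^TX)^{-1}X^T$, the orthogonal projection onto the column space of $X$, linearity yields
\[
X\Delta_{t+1} = X\Delta_t - 4P\bigl(X\Delta_t \circ \tilde{g}_k\bigr).
\]
Since $X\Delta_t$ lies in the column space of $X$, we have $PX\Delta_t=X\Delta_t$, so the expression rearranges into
\[
X\Delta_{t+1} = P\bigl((1^{n\times 1} - 4\tilde{g}_k)\circ X\Delta_t\bigr).
\]

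Second, I would bound $\tilde{g}_k$. Because $g(u)=u(1-u)\leq 1/4$ on $[0,1]$, every component of $\tilde{g}_k$ (an average of values of $g$ at points in $[0,1]$) lies in $[0,1/4]$, and every component of $1^{n\times 1}-4\tilde{g}_k$ lies in $[0,1]$. Consequently $(1^{n\times 1}-4\tilde{g}_k)\circ X\Delta_t$ is a \emph{coordinatewise contraction} of $X\Delta_t$: it preserves the sign pattern of $X\Delta_t$ while producing no coordinate of larger magnitude than the corresponding coordinate of $X\Delta_t$.

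The main obstacle, and the step where I expect most of the effort to go, is the final one: showing that applying $P$ to this coordinatewise contraction of $X\Delta_t$ still yields a vector whose coordinates match $X\Delta_t$ in sign. In general, projection matrices carry off-diagonal entries of arbitrary sign, so sign preservation at the input level does not automatically transfer through $P$. The natural route is to write
\[
X\Delta_{t+1}\circ X\Delta_t = X\Delta_t\circ X\Delta_t - \bigl(P(\alpha\circ X\Delta_t)\bigr)\circ X\Delta_t,
\]
with $\alpha=4\tilde{g}_k\in[0,1]^n$, and then bound the subtracted term coordinatewise by exploiting $PX\Delta_t=X\Delta_t$ together with the strict bound $\alpha<1^{n\times 1}$ that holds whenever $\hat{\sigma}_t\in(0,1)^n$. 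If this direct route proves insufficient, a fallback is induction on $t$: the initialization $\hat{\sigma}_0 = 2^{-1}\cdot 1^{n\times 1}$ pins down the sign pattern of the first step, which can then be propagated forward via the boxed identity. Either way, controlling how the projection $P$ interacts with the diagonal shrinkage by $(1^{n\times 1}-4\tilde{g}_k)$ is where the real technical work lies.
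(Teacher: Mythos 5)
Your derivation of the key identity $X\Delta_{t+1} = M\,\mathrm{diag}(1^{n\times 1}-4\tilde{g}_k)\,X\Delta_t$ (with $M=P=X(X^TX)^{-1}X^T$) is exactly the computation the paper performs, including the use of idempotency via $MX\Delta_t=X\Delta_t$ and the bound $0\le 4\tilde{g}_k\le 1$. But the proposal stops at precisely the point where the lemma still has to be proved: you say that pushing the sign pattern through the projection is ``where the real technical work lies,'' and you offer two candidate routes without carrying either out. That is a genuine gap, not a deferral of routine detail. Moreover, the coordinatewise route you sketch cannot succeed as a pure matrix statement: for a projection $M$, a vector $v=Mv$ in its range, and a diagonal matrix $E$ with entries in $(0,1]$, it is \emph{not} true in general that $(MEv)\circ v\ge 0$. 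For instance, take $M=I-ww^T$ with $w=3^{-1/2}(1,1,1)^T$, $v=(0.1,\,1,\,-1.1)^T$ (which lies in the range of $M$), and $E=\mathrm{diag}(\epsilon,1,\epsilon)$ with $\epsilon$ small; then the first coordinate of $MEv$ is approximately $-1/3$ while $v_1=0.1>0$. So any correct completion must exploit more than the range of $\tilde{g}_k$ and the idempotency of $M$ --- for example the coupling between $\hat{\sigma}_{t-1}$ and $\Delta_t$ --- which neither of your fallbacks (the coordinatewise bound, or induction from $\hat{\sigma}_0=2^{-1}\cdot 1^{n\times 1}$) supplies as stated.

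For comparison, the paper closes this step by forming $X\Delta_{t+1}(X\Delta_t)^T = M\,\mathrm{diag}(1-4\tilde{g}_k(\hat{\sigma}_{t-1}))\,X\Delta_t\Delta_t^TX^T$ and asserting that, being a product of positive semi-definite matrices, it is positive semi-definite and hence has a nonnegative diagonal. Your instinct that this step is delicate is well founded: a product of positive semi-definite matrices need not be symmetric, need not be positive semi-definite, and need not have nonnegative diagonal entries, so the paper's one-line justification does not fill the hole you identified either. In short: your algebra reproduces the paper's, you correctly located the hard step, but the proof of that step is missing, and the routes you propose for it do not obviously close the gap.
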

\begin{proof}
If we define the idempotent matrix $M=X(X^TX)^{-1}X^T$, then we write:
\begin{eqnarray}
X\Delta_{t+1} &=& 4X(X^TX)^{-1}X(y-\hat{\sigma}_t) \nonumber \\
              &=& 4M(y-\hat{\sigma}_t)  \nonumber\\
              &=& 4M[y-\hat{\sigma}_{t-1}-(X\Delta_{t})\circ \tilde{g}_k(\hat{\sigma}_{t-1})]  \nonumber\\
              &=& 4MM(y-\hat{\sigma}_{t-1})-16M\, \text{diag}\, (\tilde{g}_k(\hat{\sigma}_{t-1}))  M(y-\hat{\sigma}_{t-1})  \nonumber\\
              &=& 4M\,\text{diag}\,(1-4\tilde{g}_k(\hat{\sigma}_{t-1}))M(y-\hat{\sigma}_{t-1})  \nonumber\\
              &=& M\,\text{diag}\,(1-4\tilde{g}_k(\hat{\sigma}_{t-1}))X\Delta_t  \; ,
\label{stepsize_rel}
\end{eqnarray}
where we made use of the idempotency of $M$.  Next considering the element-wise product as the diagonal of the outer product of these two matrices,

$$X\Delta_{t+1}(X\Delta_{t})^T = M\,\text{diag}\,(1-4\tilde{g}_k(\hat{\sigma}_{t-1}))X\Delta_t \Delta_t^TX^T$$

Since we clearly have that $1-4\tilde{g}_k(\hat{\sigma}_{t-1}) > 0$ no matter what value $\hat{\sigma}_{t-1}$ takes (due to the definition of $g_k$), we have that this matrix is the product of positive semi-definite matrices, and therefore is itself positive semi-definite.  Therefore the diagonal elements are all non-negative, and we have proved the claim.
\end{proof}

This result allows us to analyze our approximation to the logistic function as though we were using the forwards Euler method to integrate the differential equation (\ref{logistic_deriv}), since all the steps for any particular unit will be in the same direction.

\begin{lem}
As long as each step $k^{-1}|X\Delta_t| \leq  \tau < 1$ (where the inequality is element-wise), then $0 < \hat{\sigma}_t < 1,\ \forall t$ (i.e., the approximate logistic values will remain between 0 and 1).
\end{lem}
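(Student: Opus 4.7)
The plan is to prove the invariant by induction, first on the outer iteration $t$ and then, within each outer iteration, on the $k$ inner Euler substeps that define $\tilde{g}_k$. The base case is immediate since $\hat{\sigma}_0 = 2^{-1}\cdot 1^{n\times 1}$, whose every coordinate lies in $(0,1)$. For the inductive step I would unpack $\hat{\sigma}_{t+1}$ via (\ref{sigma_approx_k}) as a telescoping sum of $k$ inner updates
$$\hat{\sigma}_{i+1}^\star = \hat{\sigma}_i^\star + k^{-1}X\Delta_t \circ g(\hat{\sigma}_i^\star), \qquad \hat{\sigma}_0^\star = \hat{\sigma}_t, \quad \hat{\sigma}_k^\star = \hat{\sigma}_{t+1},$$
so that it suffices to prove componentwise that one such inner step maps $(0,1)$ into $(0,1)$ provided the per-coordinate step size $h := (k^{-1}X\Delta_t)_j$ satisfies $|h|\leq \tau < 1$.

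The core is a one-line scalar computation. Fix a coordinate, write $s \in (0,1)$ for the current value and $s' = s + h\,s(1-s)$ for its update. The two algebraic identities
$$1 - s' = (1-s)(1 - hs), \qquad s' = s\bigl(1 + h(1-s)\bigr)$$
give both bounds at once. For the upper bound, $1-s>0$ and $|hs|\leq \tau s < 1$ imply $1-hs>0$, hence $s'<1$. For the lower bound, $s>0$ and $|h(1-s)|\leq \tau(1-s)<1$ imply $1+h(1-s)>0$, hence $s'>0$. These two inequalities together close the induction on inner substeps.

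Chaining the inner induction shows $\hat{\sigma}_{t+1}\in(0,1)^n$ whenever $\hat{\sigma}_t\in(0,1)^n$, and combining this with the base case completes the outer induction. I would emphasize that the two factorizations above are really just restatements of the fact that $g(s)=s(1-s)$ vanishes exactly at the endpoints of $(0,1)$, so a forward-Euler step with step size strictly less than $1$ in absolute value cannot cross either endpoint. There is no serious obstacle here; the only subtlety worth flagging is that the hypothesis bounds $k^{-1}|X\Delta_t|$ rather than the full outer step $|X\Delta_t|$, which is exactly what is needed for the inner substeps to each inherit the bound $|h|\leq \tau<1$.
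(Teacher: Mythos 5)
Your proof is correct, and it is built on the same underlying idea as the paper's: a forward-Euler step $s \mapsto s + h\,g(s)$ with $|h|<1$ cannot cross $0$ or $1$ because $g(s)=s(1-s)$ vanishes at those endpoints. Your execution, however, is noticeably more careful than the paper's. The paper argues at the level of a single outer update, writing $\hat{\sigma}_{t+1}-\hat{\sigma}_t \leq \tau\hat{\sigma}_t(1-\hat{\sigma}_t^2) < 1-\hat{\sigma}_t$ for a positive step and appealing to symmetry for negative steps; note that the displayed inequality only holds as stated if $\tau\hat{\sigma}_t(1+\hat{\sigma}_t)<1$, which fails for $\tau$ and $\hat{\sigma}_t$ both near $1$, so the exponent there is almost certainly a typo for $\tau\hat{\sigma}_t(1-\hat{\sigma}_t)$. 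The paper also glosses over the fact that the outer update in (\ref{sigma_approx_k}) is really a composition of $k$ inner substeps, each of which must individually stay in $(0,1)$ for the intermediate values $\hat{\sigma}_i^\star$ to be well controlled. You handle both points cleanly: the explicit inner induction with step $h=(k^{-1}X\Delta_t)_j$ is exactly the right granularity, and the two factorizations $1-s'=(1-s)(1-hs)$ and $s'=s\bigl(1+h(1-s)\bigr)$ dispose of both endpoints and both signs of $h$ simultaneously, with no symmetry hand-waving. In short, your proof is a strictly tighter version of the paper's argument and could replace it.
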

\begin{proof}
Suppose that the step is positive for all units and $\hat{\sigma}_t < 1$, then:
$$\hat{\sigma}_{t+1}-\hat{\sigma}_t \leq \tau\hat{\sigma}_t(1-\hat{\sigma}_t^2) < 1-\hat{\sigma}_t \; ,$$
so we also have that  $\hat{\sigma}_{t+1} < 1$.  Likewise for units which are involved in a negative step, if they are greater than 0, then they remain so into the next iteration by an argument which is symmetric to the one above.  Therefore we have that our logistic values never leave the interval $(0,1)$.
\end{proof}

With this we also have that $0 < 4\tilde{g}_k(\hat{\sigma}_t) < 1$ for all $t$, from the definition of $g$ and $\tilde{g}_k$.  Substitution into (\ref{stepsize_rel}), yields that:
\begin{equation}\label{approx_sigma_bound}
||X\Delta_{t+1}||_2 \leq ||M||_2\ ||\,\text{diag}\,(1-4\tilde{g}_k(\hat{\sigma}_{t-1}))||_2\ ||X\Delta_t||_2 < ||X\Delta_t||_2 \; ,
\end{equation}
since $M$ has eigenvalues which are each either 0 or 1.  This shows that the magnitude of the steps for the individual units is shrinking towards zero.  Therefore we  conclude that eventually, our approximations of the logistic values stop updating.  If we assume that $X$ has $d$ linearly independent columns, then this also implies that $\Delta_t$ is going towards zero, and therefore our algorithm eventually converges.

\subsection{Quality of the Logistic Approximation}\label{sec:p2error}

We now analyze the error in the approximation of the logistic function values.  We then use this together with  the convexity of the problem to  yield a bound on the error in the convergent parameters (see (\ref{eq:error_2})).  To aid the notation, in this section we consider the problem of estimating the logistic values for just a single case, and specifically one for which the steps are all positive.  Due to the symmetry of the logistic function about 0, we will then have the same type of bounds on the error when the approximation updates in the negative direction.  We first show a loose upper bound on the supremum of the error which would be encountered if the approximation was run for an infinite number of steps of size at most $\tau$, and then use this to bound the error after finitely many such steps.

As we have shown by the above monotonicity argument, our approximation to the logistic function is essentially analogous to using Euler's method to integrate the derivative of the logistic function.  Since we consider approximating a single value, we change the names of our variables to avoid confusion with the previous vector valued approximation.  If we denote by $\hat{s}_t$ the approximated value after $t$ steps of various sizes,  $\tau_0\ldots \tau_{t-1} < \tau$.  Thus  $\hat{s}_t \approx s_t = \sigma(a_t)$ where $a_t = \sum_{i=0}^{t-1}\tau_i$. We  compare this approximation to the exact values and consider the error:
$$\xi_t = \hat{s}_{t} - s_{t} \; .$$

Making use of the step (\ref{sigma_lin}), we  evaluate the error in the next iteration:
\begin{align*}
\xi_{t+1} & = \hat{s}_{t+1} - s_{t+1} \\
          & = \hat{s}_t + \tau_{t}g(\hat{s}_t) -s_t -\tau_tg(s_t) - 2^{-1}\tau_t\sigma^{\prime\prime}(\cdot)|_{a^\star_t} \\
          & = \xi_t + \tau_t[g(\hat{s}_t) - g(s_t)] + \zeta_t \\
          & = \xi_t + \tau_t(\hat{s}_t-s_t)g^\prime(\cdot)|_{s^{\star}_t} + \zeta_t \\
          & = \xi_t(1+\tau_tg^\prime(\cdot)\big|_{s^{\star}_t}) + \zeta_t \\
          & = \xi_t(1+\tau_t -2\tau_ts^{\star}_t) + \zeta_t
\end{align*}
Where we have defined
\begin{equation}\label{eqn:zeta}\zeta_t = - 2^{-1}\tau_t\sigma^{\prime\prime}(\cdot)\big|_{a^\star_t} \end{equation}

\noindent and $a_t \leq a^\star_t \leq a_{t+1}$ is some value in the interval about which the second derivative is taken.  Likewise $s^\star_t$ is bounded between $s_t$ and $\hat{s}_t$.  As we have seen from (\ref{approx_sigma_bound}), as long as $\tau_t \leq \tau < 1$ then $0 < \hat{s}_t < 1$ for all $t$.  Since we only consider positive steps $\tau_t > 0$ then we have that $2^{-1} \leq \hat{s}_t < 1$, and hence the same bound applies to $s_t^\star$.  Therefore we have that:
$$|\xi_{t+1}| \leq |\xi_t| + |\zeta_t|$$
Therefore we see that:
\begin{equation}\label{bound_sum}\sup_t|\xi_t| \leq \sum_{i=1}^\infty |\zeta_i|\end{equation}

Examining the form of $\sigma^{\prime\prime}(\cdot)$, we find it to be a function which is everywhere negative.  Examining the third derivative, we find that the second derivative has exactly one stationary point in $[0,\infty)$ which is located at:

$$a^\star = -\log{\frac{6-\sqrt{12}}{6+\sqrt{12}}}, \quad \sigma(a^\star) = \frac{6+\sqrt{12}}{12} $$

Whats more, we see that $\partial^3\sigma(\cdot) < 0$ on $[0,a^\star)$, and $\partial^3\sigma(\cdot) > 0$ on $(a^\star,\infty)$.  Therefore we have that $a^\star$ is the minimum of the function.  Using this we  bound the sum (\ref{bound_sum}) by an integral:

$$-\sum_{t=0}^\infty\sigma^{\prime\prime}(\cdot)|_{a^\star_t} \leq -\int_0^\infty \sigma^{\prime\prime}(a)\ da - 2\tau\sigma^{\prime\prime}(x^\star)
        = 4^{-1} - 2\tau\sigma^{\prime\prime}(x^\star)
\; .
$$
Substituting this into (\ref{bound_sum}) and (\ref{eqn:zeta}) we have that:
\begin{equation}\label{euler_bound}\max_t|\xi_t| \leq 2^{-1}\tau(4^{-1} - 2\tau\sigma^{\prime\prime}(x^\star)) \stackrel{\text{def}}{=} c\tau + d\tau^2 \approx c\tau \; .
\end{equation}
We can make  the approximation arbitrarily tight by decreasing the step size.


\newpage





\end{document}